\tikzstyle{every picture}=[line join=round,line width=.75pt,every label/.append style={font=\scriptsize},label distance=-1.5pt]
\tikzstyle{every node}=[font=\scriptsize]
\newcommand{\R}{\mathbb R}
\renewcommand{\dim}{\mathsf{dim}}
\newcommand{\scalProd}[2]{\langle{#1},{#2}\rangle}
\newcommand{\N}{N} 
\newcommand{\conv}{\mathsf{conv}}
\newcommand{\fool}{\mathsf{fool}}
\newcommand{\supp}{\mathsf{supp}}
\newcommand{\rc}{\mathsf{rc}}
\newcommand{\pp}{\mathsf{p}}
\newcommand{\LL}{\mathsf{\ell}}
\newcommand{\cc}{\mathsf{c}}
\newcommand{\xx}{\mathsf{x}}
\newcommand{\lines}{\mathcal{L}}
\newcommand{\points}{\mathcal{P}}
\DeclareMathOperator{\edge}{\mathsf P_\mathsf{edge}}
\DeclareMathOperator{\stab}{\mathsf{STAB}}
\DeclareMathOperator{\xc}{\mathsf{xc}}
\DeclareMathOperator{\stp}{P_{\mathsf{sp.trees}}}
\newcommand{\PG}{\mathsf{PG}}
\newcommand{\GF}{\mathsf{GF}}
\spnewtheorem{lem}[theorem]{Lemma}{\bfseries}{\itshape}
\spnewtheorem{defn}[theorem]{Definition}{\bfseries}{\rmfamily}
\spnewtheorem{clm}[theorem]{Claim}{\bfseries}{\itshape}
\spnewtheorem*{subproof}{Subproof}{\itshape}{\rmfamily}
\renewcommand{\@Opargbegintheorem}[4]{%
  #4\trivlist\item[\hskip\labelsep{#3#2\@thmcounterend}]}
\begin{document}

\title{Extension complexity of stable set polytopes of bipartite graphs}
\author{Manuel Aprile,\inst{1} Yuri Faenza,\inst{2} Samuel Fiorini,\inst{3} Tony Huynh,\inst{3} Marco Macchia\inst{3}}

\authorrunning{M.\ Aprile et al.} 

\institute{\'Ecole Polytechnique F\'ed\'erale de Lausanne (EPFL), Lausanne, Switzerland
\email{manuel.aprile@epfl.ch}
\and
IEOR Department, Columbia University, New York, USA
\email{yf2414@columbia.edu}
\and
Universit\'e libre de Bruxelles, Brussels, Belgium
\email{\{sfiorini,mmacchia\}@ulb.ac.be}
\email{tony.bourbaki@gmail.com}}
\date{}
\maketitle

\begin{abstract}
 The \emph{extension complexity} $\xc(P)$ of a polytope $P$ is the minimum number of facets of a polytope that affinely projects to $P$. Let $G$ be a bipartite graph with $n$ vertices, $m$ edges, and no isolated vertices. Let $\stab(G)$ be the convex hull of the stable sets of $G$. It is easy to see that $n \leqslant \xc (\stab(G)) \leqslant n+m$. We improve both of these bounds. For the upper bound, we show that $\xc (\stab(G))$ is $O(\frac{n^2}{\log n})$, which is an improvement when $G$ has quadratically many edges. For the lower bound, we prove
 that $\xc (\stab(G))$ is $\Omega(n \log n)$ when $G$ is the incidence graph of a finite projective plane. We also provide examples of $3$-regular bipartite graphs $G$ such that the edge vs stable set matrix of $G$ has a fooling set of size $|E(G)|$.
\end{abstract}

\section{Introduction}
A polytope $Q \subseteq \R^p$ is an \emph{extension} of a polytope $P \subseteq \R^d$ if there exists an affine map $\pi: \R^p \rightarrow \R^d$ with $\pi(Q) = P$. The \emph{extension complexity} $\xc(P)$ of $P$ is the minimum number of facets of any extension of $P$. If $Q$ is an extension of $P$ such that $Q$ has significantly fewer facets than $P$, then it is advantageous to run linear programming algorithms over $Q$ instead of $P$.


One example of a polytope that admits a much more compact representation in a higher dimensional space is the spanning tree polytope, $\stp (G)$. Edmonds' \cite{Edmonds71} classic description of $\stp (G)$ has $2^{\Omega(|V|)}$ facets. However, 
Wong \cite{Wong80} and Martin \cite{Martin91} proved that for every connected graph $G = (V,E)$,  
\[
|E| \leqslant \xc(\stp(G)) \leqslant O(|V| \cdot |E|).
\]

Fiorini, Massar, Pokutta, Tiwary, and de Wolf \cite{FMPTW15} were the first to show that many polytopes arising from $\mathsf{NP}$-hard problems (such as the stable set polytope) do indeed have high extension complexity.  Their results answer an old question of Yannakakis \cite{Yannakakis91} and do not rely on any complexity assumptions such as $\mathsf P \neq \mathsf{NP}$.

On the other hand, Rothvo\ss~\cite{Rothvoss14} proved that the perfect matching polytope of the complete graph $K_n$ has extension complexity at least $2^{\Omega (n)}$.  This is somewhat surprising since the maximum weight matching problem can be solved in polynomial-time via Edmond's blossom algorithm \cite{Edmonds65}.  By now many accessible introductions to extension complexity are available (see \cite{Kaibel11}, \cite{CGZ13}, \cite{CCZ14}, \cite{Roughgarden15}).











Let $G=(V,E)$ be a (finite, simple) graph with $n\coloneqq|V|$ and $m\coloneqq|E|$.  The \emph{stable set polytope} of $G$, denoted $\stab(G)$, is the convex hull of the characteristic vectors of stable sets of $G$. As previously mentioned, $\stab(G)$ can have very high extension complexity. In \cite{FMPTW15}, it is proved that if $G$ is obtained from a complete graph by subdividing each edge twice, then $\xc(\stab(G))$ is at least $2^{\Omega(\sqrt n)}$. Very recently, G{\"o}{\"o}s, Jain, and Watson \cite{GJW16} improved this to $2^{\Omega(n / \log n)}$, via a different class of graphs. For perfect graphs, Yannakakis \cite{Yannakakis91} proved an upper bound of $n^{O(\log n)}$, and it is an open problem whether Yannakakis' upper bound can be improved to a polynomial bound.

In this paper we restrict our attention to bipartite graphs.  Let $G=(V,E)$ be a bipartite graph with $n$ vertices, $m$ edges and no isolated vertices. By total unimodularity,
\[
\stab(G) = \{x \in \R^V \mid x_u \geqslant 0  \text{ for all } u \in V,\ x_u + x_v \leqslant 1  \text{ for all } uv \in E\}\,,
\]
and so $n \leqslant \xc (\stab(G)) \leqslant n+m$.  In this case $\xc (\stab(G))$ lies in a very narrow range, and it is a good test of current methods to see if we can improve these bounds.  

The situation is analogous to what happens with the spanning tree polytope of (arbitrary) graphs, 
where as previously mentioned, we also know that $\xc(\stp(G))$ lies in a very narrow range.
Indeed, a notorious problem of Goemans (see \cite{KT17}) is to improve the known bounds for $\xc(\stp(G))$, but this is still wide open.  

However, for the stable set polytopes of bipartite graphs, we are able to give an improvement.  Our main results are the following.

\begin{theorem}\label{thm:upperbound}
For all bipartite graphs $G$ with $n$ vertices, the extension complexity of
$\stab (G)$ is $O(n^2 / \log n)$.  
\end{theorem}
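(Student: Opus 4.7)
The plan is to build a compact extension of $\stab(G)$ by replacing the $m$ edge inequalities with a smaller system obtained from a biclique decomposition of $E(G)$. The key observation to establish first is that if $(X,Y)$ is a complete bipartite subgraph of $G$, then the $|X|\cdot|Y|$ edge inequalities $x_u+x_v\leqslant 1$ (for $u\in X$, $v\in Y$) can be \emph{simulated} by introducing two auxiliary variables $y_X,y_Y$ together with the $|X|+|Y|+1$ inequalities $x_u\leqslant y_X$ ($u\in X$), $x_v\leqslant y_Y$ ($v\in Y$), and $y_X+y_Y\leqslant 1$; indeed the projection onto the $x$-variables recovers exactly the original constraints because $\max_{u\in X}x_u+\max_{v\in Y}x_v=\max_{(u,v)\in X\times Y}(x_u+x_v)$. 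Consequently, if $E(G)$ is covered by bicliques $(X_1,Y_1),\dots,(X_t,Y_t)$, combining these encodings with the $n$ non-negativity constraints $x_u\geqslant 0$ yields an extension of $\stab(G)$ using at most $n+\sum_{i=1}^{t}(|X_i|+|Y_i|)+t$ inequalities.

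The second step is to exhibit, for every bipartite $G$ on $n$ vertices, a biclique partition of $E(G)$ with total weight $\sum_i(|X_i|+|Y_i|)=O(n^2/\log n)$. Let $V(G)=A\cup B$ be the bipartition and set $k\coloneqq\lfloor\tfrac{1}{2}\log_2 n\rfloor$. Partition $A$ arbitrarily into blocks $A_1,\dots,A_s$ of size at most $k$, so that $s=O(n/\log n)$. For each block $A_i$ and each non-empty $T\subseteq A_i$, define $B^{(i)}_T\coloneqq\{v\in B:N(v)\cap A_i=T\}$; then $(T,B^{(i)}_T)$ is a biclique of $G$, and every edge $uv$ with $u\in A_i$ belongs to exactly one such biclique (the one indexed by $T=N(v)\cap A_i$). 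Using that for each $i$ the sets $\{B^{(i)}_T\}_{T\subseteq A_i}$ partition $B$, and that $\sum_{T\ni u}1=2^{k-1}$ for each $u\in A_i$, the total weight is bounded by
\[
\sum_{i=1}^{s}\sum_{\emptyset\neq T\subseteq A_i}\bigl(|T|+|B^{(i)}_T|\bigr)\;\leqslant\;s\cdot k\cdot 2^{k-1}+s\cdot|B|\;=\;O\bigl(n\cdot 2^k\bigr)+O\bigl(n^2/k\bigr)\;=\;O\bigl(n^2/\log n\bigr),
\]
while the number of bicliques is at most $s\cdot 2^k=O(n^{3/2}/\log n)$, which is dominated. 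Plugging this into the first step gives $\xc(\stab(G))=O(n^2/\log n)$.

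The main obstacle is calibrating the block size $k$: enlarging $k$ shrinks the $O(n^2/k)$ contribution (fewer blocks, hence fewer copies of $B$) but inflates the $O(n\cdot 2^k)$ contribution (exponentially more subsets per block), and the balancing at $k=\Theta(\log n)$ is exactly what produces the logarithmic save over the trivial bound $n+m$. Conceptually, the second step is a biclique-partition argument in the style of classical Boolean matrix decomposition results (e.g.\ Lupanov or Pudl\'ak--R\"odl--Savick\'y type covering arguments), specialized so that the weight being minimized matches the quantity controlling the extension complexity of $\stab(G)$.
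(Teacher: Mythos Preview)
Your proof is correct and, at its core, relies on the same idea as the paper---a biclique cover of $E(G)$ with total weight $O(n^2/\log n)$---but the packaging differs. The paper factors the argument through the edge polytope: it writes $\stab(G)=\R^V_{\geqslant 0}\cap\{x:\langle x,y\rangle\leqslant 1\text{ for all }y\in\edge(G)\}$, applies Martin's lemma to obtain $\xc(\stab(G))\leqslant n+\xc(\edge(G))+1$, and then invokes the bound $\xc(\edge(G))=O(n^2/\log n)$ of Fiorini--Kaibel--Pashkovich--Theis, which in turn rests on Tuza's biclique-cover theorem. You instead build the extension of $\stab(G)$ directly---your two auxiliary variables $y_X,y_Y$ per biclique are precisely the product-of-simplices coordinates that give the small extension of the edge polytope of that biclique---and you re-prove Tuza's bound from scratch via the block-decomposition argument. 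Your route is self-contained and makes the mechanism transparent; the paper's route is shorter by outsourcing both steps to the literature. One small cosmetic point: bicliques with $B^{(i)}_T=\varnothing$ should be dropped (or the $y$-variables explicitly bounded) so that the extension is a polytope rather than an unbounded polyhedron, but this does not affect the count.
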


Note that Theorem \ref{thm:upperbound} is an improvement over the obvious upper bound when $G$ has quadratically many edges.

\begin{theorem}\label{thm:lowerbound}
  There exists an infinite class $\mathcal{C}$ of bipartite graphs such that every $n$-vertex graph in $\mathcal{C}$ has extension complexity $\Omega(n \log n)$. 
\end{theorem}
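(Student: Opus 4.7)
The class $\mathcal{C}$ will consist of the point-line incidence graphs of finite projective planes: for each prime power $q$, let $G_q$ be the bipartite graph on $n = 2(q^2+q+1)$ vertices whose two color classes are the points and lines of $\PG(2,q)$, with edges recording incidence. By Yannakakis' theorem and the standard inequality $\xc(\stab(G)) \geq \rc(M)$, it suffices to lower bound the rectangle covering number of the edge-vs-stable-set slack submatrix $M$ (rows indexed by edges $e$ of $G_q$, columns indexed by stable sets $S$ of $G_q$, entries $M_{e,S} = 1 - |e \cap S| \in \{0,1\}$). Every monochromatic $1$-rectangle of $M$ is parametrized by a pair $(A,B)$ of disjoint subsets of $V(G_q)$, and covers the $1$-entry $(e,S)$ exactly when $e \subseteq A$ and $S \subseteq B$; so the task reduces to showing that $\Omega(n \log n)$ such pairs are needed to cover every $1$-entry.

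Two combinatorial-algebraic features of $\PG(2,q)$ will drive the argument. First, $G_q$ is strictly $K_{2,2}$-free (any two points lie on a unique line; any two lines meet in a unique point), which sharply restricts the pairs $(A,B)$ that can yield rectangles containing many edges: whenever $A$ intersects both color classes nontrivially, one of the two ``sides'' of $A$ must be very small. Second, the incidence matrix of $\PG(2,q)$ has the tight singular-value spectrum $\sqrt{N+q}$ (once, for $N = q^2+q+1$) and $\sqrt{q}$ ($N-1$ times), making the incidence relation pseudorandom: no vertex subset is atypically dense in incidences. Combined with a rich enough column family---for instance the maximal stable sets of the form $\{p\} \cup \{\ell : p \notin \ell\}$ for each point $p$, together with their line-duals---these features should yield the sought lower bound on $\rc(M)$.

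The principal obstacle is the logarithmic factor. The naive area bound (total $1$-entries divided by largest rectangle area), even when combined with $K_{2,2}$-freeness, gives only $\rc(M) = \Omega(n)$. Moreover the fooling-set technique (used elsewhere in this paper for $3$-regular examples) is intrinsically capped by the number of edges, which here is $|E(G_q)| = \Theta(n^{3/2})$; while in principle large enough to exceed the target, fooling sets of that order appear unattainable in the highly symmetric $G_q$. The expected route to the extra $\log n$ factor is therefore a fractional/probabilistic refinement: construct a probability distribution on the $1$-entries of $M$ for which every monochromatic rectangle has expected mass $O(1/(n\log n))$, whence $\rc(M) = \Omega(n\log n)$ follows immediately. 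Producing and analyzing such a distribution---exploiting the uniform spectral gap of the incidence matrix of $\PG(2,q)$ and Cauchy--Schwarz-type averaging against the pseudorandom incidence structure---will be the crux of the proof.
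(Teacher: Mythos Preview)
Your proposal correctly identifies the class of graphs (incidence graphs of $\PG(2,q)$) and the right general technique---a weighted/fractional argument on the $1$-entries of the edge-vs-stable-set matrix to lower-bound its rectangle covering number---and this matches the paper's framework exactly. However, everything you defer to ``the crux of the proof'' is in fact the entire substance of the argument, and the hints you offer for carrying it out do not lead to a working proof.

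The paper uses no spectral gaps and no Cauchy--Schwarz averaging. It writes down an explicit family of \emph{special} $1$-entries and an explicit weight: for each edge $e = \pp\LL$ and each nonempty $X \subseteq \N(\LL) \setminus \{\pp\}$, it takes the stable set $S(X) \coloneqq X \cup (\lines \setminus \N(X))$ and gives the entry $(e, S(X))$ weight $\bigl(|X|\,\binom{q}{|X|}\,(q+1)\bigr)^{-1}$. Summing over all sizes $|X|$ produces the harmonic sum $\sum_{k=1}^{q} 1/k$, and this is precisely the source of the $\log n$ factor. Showing that every maximal rectangle has total weight at most $1$ is then a hand-crafted case analysis via binomial identities, together with the $C_4$-freeness of $\mathcal I(q)$, which forces all sets $X$ with $(\pp\LL, S(X))$ in a given rectangle to contain a common minimal one.

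Your suggested column family---the stable sets $\{\pp\} \cup \{\LL : \pp \notin \LL\}$ and their duals---is exactly the case $|X|=1$ of the paper's construction. Restricting to that case collapses the harmonic sum to a single term and yields only a linear bound; the logarithm comes precisely from letting $|X|$ range over $\{1,\dots,q\}$. Spectral pseudorandomness of the incidence matrix, while true, plays no role in the argument and does not by itself suggest the right distribution. So the gap is real: you have the scaffolding, but not the construction that makes it stand.
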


These are the first known examples of stable set polytopes of bipartite graphs where the extension complexity is more than linear in the number of vertices. For instance, $\xc (\stab (K_{n/2,n/2})) = \Theta(n)$. To the best of our knowledge, even for general perfect graphs $G$, the previous best lower bound for $\xc (\stab (G))$ was the trivial bound $|V(G)|$.




\textbf{Paper Organization.} In Section \ref{sec:preliminaries} we define rectangle covers and fooling sets and we give examples of $3$-regular graphs with tight fooling sets. 
We prove Theorem \ref{thm:upperbound} in Section \ref{sec:upper bound} and Theorem \ref{thm:lowerbound} in Section \ref{sec:lower bound}.   In Section \ref{sec:limitations} we show that it is impossible to prove a better lower bound with the approach in Section \ref{sec:lower bound}.  Thus, to further improve the lower bound, different methods (or different graphs) are required.

\section{Rectangle Covers and Fooling Sets} \label{sec:preliminaries}

Consider a polytope $P \coloneqq \conv(X) = \{x \in \R^d \mid Ax \geqslant b\}$, where $X \coloneqq \{x^{(1)},\dots,x^{(n)}\} \subseteq \R^d$, $A \in \R^{m \times d}$ and $b \in \R^m$.
The \emph{slack matrix} of $P$ (with respect to the chosen inner and outer descriptions of the polytope) is the matrix $S \in \R^{m \times n}_{\geqslant 0}$ having rows indexed by the inequalities $A_1 x \geqslant b_1$, \ldots, $A_m x \geqslant b_m$ and columns indexed by the points $x^{(1)}$, \ldots, $x^{(n)}$, defined as $S_{ij}\coloneqq A_i x^{(j)} - b_i \geqslant 0$.

Yannakakis~\cite{Yannakakis91} proved that the extension complexity of $P$ equals the
nonnegative rank of $S$. In this work, we only rely on a lower bound that follows directly 
from this fact.
For a matrix $M$, we define the \emph{support} of $M$ as $\supp(M) \coloneqq \{(i,j) \mid M_{ij} \neq 0\}$. A \emph{rectangle} is any set of the form $R = I \times J$, with $R \subseteq \supp(M)$. A size-$k$ \emph{rectangle cover} of $M$ is a collection $R_1, \dots, R_k$ of rectangles such that $\supp(M) = R_1 \cup \dots \cup R_k$. The \emph{rectangle covering bound} of $M$ is the minimum size of a rectangle cover of $M$, and is denoted $\rc(M)$.

\begin{theorem} [Yannakakis, \cite{Yannakakis91}]\label{thm:Yannakakis2} 
Let $P$ be a polytope with $\dim(P) \geqslant 1$ and let $S$ be any slack matrix of $P$. Then, $\xc(P) \geqslant \rc(S)$.
\end{theorem}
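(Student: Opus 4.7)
The plan is to prove the classical Yannakakis factorization result in two steps. First, I would establish that $\xc(P) \geqslant \mathrm{rk}_+(S)$, where $\mathrm{rk}_+(S)$ denotes the \emph{nonnegative rank} of $S$: the smallest $r$ admitting a factorization $S = TU$ with $T \in \R_{\geqslant 0}^{m \times r}$ and $U \in \R_{\geqslant 0}^{r \times n}$. Second, I would verify the elementary inequality $\mathrm{rk}_+(S) \geqslant \rc(S)$, after which the theorem follows by chaining the two bounds.

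For the first step, fix a minimum-size extension $Q = \{y \in \R^p : Ey \leqslant f\}$ of $P$, with $r := \xc(P)$ facets, together with an affine map $\pi : \R^p \to \R^d$ such that $\pi(Q) = P$. For each row $i$ of $S$, corresponding to an inequality $A_i x \geqslant b_i$ valid on $P$, the linear program $\min\{A_i\pi(y) - b_i : Ey \leqslant f\}$ has optimal value $0$, attained at any preimage under $\pi$ of a point on the face $\{A_i x = b_i\} \cap P$. By strong LP duality there exist nonnegative multipliers $\lambda^i_1, \dots, \lambda^i_r$ such that $A_i \pi(y) - b_i = \sum_{k=1}^{r} \lambda^i_k(f_k - E_k y)$ holds as an identity of affine functions on $\R^p$. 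For each vertex $x^{(j)}$ of $P$, pick any preimage $y^{(j)} \in Q$ with $\pi(y^{(j)}) = x^{(j)}$, and set $T_{ik} := \lambda^i_k$ and $U_{kj} := f_k - E_k y^{(j)}$. Both factors are nonnegative, and evaluating the identity at $y = y^{(j)}$ yields $S_{ij} = (TU)_{ij}$, so $\mathrm{rk}_+(S) \leqslant r = \xc(P)$.

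For the second step, given any nonnegative factorization $S = TU$ of inner dimension $r$, define $R_k := \{i : T_{ik} > 0\} \times \{j : U_{kj} > 0\}$ for $k = 1, \dots, r$. Each $R_k$ is contained in $\supp(S)$, since $(i,j) \in R_k$ forces $S_{ij} \geqslant T_{ik} U_{kj} > 0$. Conversely, any $(i,j) \in \supp(S)$ has $\sum_k T_{ik} U_{kj} > 0$, so some index $k$ satisfies $T_{ik} > 0$ and $U_{kj} > 0$, placing $(i,j)$ in $R_k$. Hence $R_1, \dots, R_r$ form a rectangle cover of $\supp(S)$, giving $\rc(S) \leqslant r$.

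The main obstacle is the factorization step: the Farkas-type certificate produced by LP duality must have \emph{no} constant term, which relies on the LP optimum being exactly $0$. This holds for rows of $S$ corresponding to supporting inequalities of $P$, which is precisely the setting of a slack matrix. For rows whose valid inequality is not tight on $P$ (which can happen in a non-minimal outer description), a standard trick absorbs the constant into a single extra nonnegative rank-one term without disturbing the argument, since such a row has full support and thus can be covered by a single rectangle; in particular $\dim(P) \geqslant 1$ ensures that $P$ has at least one genuine supporting inequality to anchor the construction.
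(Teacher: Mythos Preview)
The paper does not actually prove this theorem: it is stated with attribution to Yannakakis and used as a black box, preceded only by the remark that $\xc(P)$ equals the nonnegative rank of $S$ and that the rectangle covering bound ``follows directly from this fact.'' Your two-step argument ($\xc(P)\geqslant\mathrm{rk}_+(S)$ via LP duality, then $\mathrm{rk}_+(S)\geqslant\rc(S)$ via supports of a nonnegative factorization) is precisely the standard proof behind that remark, so in spirit you are doing exactly what the paper invokes.

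One imprecision is worth flagging. In your final paragraph you handle non-tight rows by adding ``a single extra nonnegative rank-one term,'' which as written yields only $\rc(S)\leqslant\xc(P)+1$. The clean fix is different from what you sketch: since the extension $Q$ is a (bounded, full-dimensional after restricting to $\aff(Q)$) polytope, there exist strictly positive multipliers $\mu_k>0$ with $\sum_k\mu_k E_k=0$, so $\sum_k\mu_k(f_k-E_ky)$ is a strictly positive constant $C$. Any leftover nonnegative constant $v_i$ from a non-tight row can then be absorbed as $(v_i/C)\sum_k\mu_k(f_k-E_ky)$, giving an exact rank-$r$ nonnegative factorization with no extra term. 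Alternatively, at the level of rectangle covers, note that non-tight rows have full support, so each of your rectangles $R_k=\{i:T_{ik}>0\}\times\{j:U_{kj}>0\}$ can be enlarged to include all non-tight rows without leaving $\supp(S)$; since $\dim(P)\geqslant1$ guarantees the tight rows already hit every column, this covers everything with $r$ rectangles. Either route closes the off-by-one gap.
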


A \emph{fooling set} for $M$ is a set of entries $F \subseteq \supp(M)$ such that $M_{i\ell} \cdot M_{kj} = 0$ for all distinct $(i,j), (k,\ell) \in F$. The largest size of a fooling set of $M$ is denoted by $\fool(M)$. Clearly, $\rc(M) \geqslant \fool(M)$.




Let $G$ be a bipartite graph. The \emph{edge vs stable set matrix} of $G$, denoted $M(G)$, is the $0/1$ matrix with a row for each edge of $G$, a column for each stable set of $G$, and a 1 in position $(e,S)$ if and only if $e \cap S=\varnothing$ (as usual, we regard edges as pairs of vertices). We say that $G$ has a \emph{tight fooling set} if $M(G)$ has a fooling set of size $|E(G)|$.  Note that if $G$ has a tight fooling set, then the non-negative rank of $M(G)$ is exactly $|E(G)|$.  Also observe that the property of having a tight fooling set is closed under taking subgraphs.  

It is easy to check that even cycles have tight fooling sets.  We now give an infinite family of $3$-regular graphs that have tight fooling sets.  A graph is \emph{$C_4$-free} if it does not contain a cycle of length four.  

\begin{theorem} \label{thm:tightexample}
Let $G = (V,E)$ be a $3$-regular, $C_4$-free bipartite graph. Then $G$ has a tight fooling set. 
\end{theorem}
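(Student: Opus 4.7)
The plan is to construct, for each edge $e = uv$ of $G$, an explicit stable set $S_e \subseteq V(G) \setminus \{u, v\}$ so that the collection $\{(e, S_e)\}_{e \in E(G)}$ forms a fooling set of $M(G)$. I would start with the ``local'' set $T_e := (N(u) \cup N(v)) \setminus \{u, v\}$, which has size exactly four by $3$-regularity. This set is stable precisely because $G$ is $C_4$-free: any edge inside $T_e$ would join some $v_i \in N(u) \setminus \{v\}$ to some $u_j \in N(v) \setminus \{u\}$, and together with $uv$, $uv_i$, $vu_j$ would form a $4$-cycle. Since $u$ and $v$ each have a neighbor in $T_e$, any stable extension of $T_e$ automatically avoids $u$ and $v$, yielding a valid candidate for $S_e$.

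The verification of the fooling condition will proceed by case analysis on a pair of distinct edges $e = uv$ and $f = xy$. If $e$ and $f$ share a vertex, say $u = x$, then the other endpoint $y$ of $f$ lies in $N(u) \setminus \{v\} \subseteq T_e$, giving $f \cap T_e \neq \varnothing$. Similarly, if $e$ and $f$ are disjoint but there is a ``cross edge'' in $\{ux, uy, vx, vy\}$, an endpoint of $f$ again lies in $T_e$. In both of these easy cases, the base set $T_e$ already witnesses the fooling condition.

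The delicate case is a pair of disjoint edges $e, f$ with no cross edge. Here I would augment $T_e$ by adding vertices from $A \setminus (N(v_1) \cup N(v_2))$, where $A$ is the part of the bipartition containing $u$ and $N(u) = \{v, v_1, v_2\}$; any such vertex is non-adjacent to $v_1$ and $v_2$ (else a $4$-cycle through $u$ would appear), so by $C_4$-freeness the augmented set remains stable. The fooling condition would then be derived by contradiction: if $f \cap S_e = \varnothing$ and $e \cap S_f = \varnothing$ simultaneously, one can trace a $4$-cycle of the form $u - v_i - x - y_j - u$, where $y_j$ is a neighbor of $x$ analogous to $v_i$ on the $f$ side, contradicting $C_4$-freeness.

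The main obstacle I anticipate is handling certain degenerate configurations in which the forced $4$-cycle collapses because $v_i$ coincides with the corresponding vertex $y_j$ coming from $f$'s neighborhood. Overcoming this degeneracy requires either a more refined choice of $S_e$ for those specific pairs, or a global coordination of the augmentations across all edges; a natural tool for such coordination is a proper $3$-edge-coloring of $G$, which can be used to assign distinct ``types'' of $S_e$ to edges of different colors so that the fooling condition holds universally. This coordination step is the main technical content of the proof.
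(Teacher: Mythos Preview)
Your plan is sound through the easy cases, and you correctly isolate the real difficulty: when $e=uv$ and $f=xy$ are disjoint with no cross edge, the would-be $4$-cycle $u\,v_i\,x\,y_j\,u$ can collapse because $v_i=y_j$ is genuinely possible (a single $B$-vertex adjacent to both $u$ and $x$). So the proposal, as it stands, is a plan with an admitted gap rather than a proof; the ``coordination step'' you flag as the main technical content is not carried out.

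The paper resolves this differently, and it is worth seeing how. Rather than starting from the symmetric local set $T_e=(N(u)\cup N(v))\setminus\{u,v\}$ and then trying to patch degeneracies, the paper uses the $3$-edge-colouring $\phi$ from the outset to assign three \emph{radically asymmetric} stable sets. Writing $a$ for the $A$-endpoint of an edge and $a_1,a_2,a_3$ for its neighbours in $B$ indexed by colour, the paper sets
\[
S_{aa_1}=A\setminus\{a\},\qquad
S_{aa_2}=\{a_1\}\cup\{a'\in A: a'\notin N(a_1)\},\qquad
S_{aa_3}=B\setminus\{a_3\}.
\]
Two of the three types are essentially ``all of one side minus a vertex''; only the colour-$2$ type has a local flavour, and even that one keeps just a single $B$-vertex rather than both $v_1,v_2$. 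With these choices the fooling verification is a short three-case check on $\phi(e)$: colour-$1$ and colour-$3$ stable sets miss at most one edge on each side, so the only nontrivial pairing is colour-$2$ versus colour-$2$, where $C_4$-freeness forces $a\notin N(a'_1)$ or $a'\notin N(a_1)$ and the argument closes in one line.

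In short, your instinct that a proper $3$-edge-colouring is the right coordinating device is exactly right, but the paper exploits it more aggressively: instead of perturbing a symmetric local construction, it lets the colouring dictate three completely different, mostly global stable sets, which makes the degenerate case disappear rather than requiring a separate fix.
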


\begin{proof}
For $X \subseteq V$, we let $\N(X)$ denote the set of neighbours of $X$. Let $V = A \cup B$ be a bipartition of the vertex set, and let $\phi : E \to \{1,2,3\}$ be a proper edge coloring of $G$, which exists by $3$-regularity and K\"onig's edge-coloring theorem (see e.g. \cite[Theorem 20.1]{S04}). For each vertex $a \in A$, we name its neighbors $a_1, a_2, a_3 \in B$ so that $\phi(aa_i) = i$. For each $a \in A$, consider the following stable sets: 
\[
\renewcommand\arraystretch{1.1}
\begin{array}{l}
S_{aa_1} \coloneqq A \setminus \{a\}\\
S_{aa_2} \coloneqq \{a_1\} \cup \{a' \in A \mid a' \notin \N(a_1) \}\\
S_{aa_3} \coloneqq B \setminus \{a_3\}\,.
\end{array}
\]
This defines a stable set $S_e$ disjoint from $e$, for every edge $e \in E$. Since $\phi$ is proper, no two of these stable sets are equal.
We claim that $\{(e,S_e) \mid e \in E\}$ is a fooling set in the edge vs stable set matrix of $G$.

Let $e$ and $f$ be distinct edges. We want to show that $S_e$ intersects $f$ or $S_f$ intersects $e$. Consider the following three cases. Let $e = aa_i$, where $i = \phi(e)$.\medskip

\emph{Case 1}. If $\phi(e) = 1$, then $S_e = S_{aa_1}$ intersects $f$ unless $f = aa_i$ for some $i \in \{2,3\}$. In both cases we have $a_1 \in S_f \cap e$.\medskip

\emph{Case 2}. If $\phi(e) = 3$, then $S_e = S_{aa_3}$ intersects $f$ unless $f = a'a_3$ for some $a' \in A$. Either $\phi(f) = 1$ and $S_f$ intersects $e$ (as in Case 1), or $\phi(f) = 2$. In the last case, since $G$ is $C_4$-free, we have $a \notin \N(a'_1)$. It follows that $S_f = S_{a'a_3} = S_{a'a'_2}$ intersects $e$.\medskip

\emph{Case 3}. If $\phi(e) = 2$, then we may also assume $\phi(f) = 2$ since otherwise by exchanging the roles of $e$ and $f$ we are back to one of the previous cases. Let $a'$ denote the endpoint of $f$ in $A$, so that $f = a'a'_2$. Because $\phi$ is proper, $a' \neq a$ and $a_1' \neq a_1$. Since $G$ is $C_4$-free, we have $a \notin \N(a'_1)$ or $a' \notin \N(a_1$). Hence, $a \in S_f \cap e$ or $a' \in S_e \cap f$.
\end{proof}

Note that there are infinitely many $3$-regular, $C_4$-free bipartite graphs.  For example, we can take a hexagonal grid on a torus.






\section{An Improved Upper Bound} \label{sec:upper bound}

In this section we prove Theorem \ref{thm:upperbound}. We use the following result of Martin \cite{Martin91}.

\begin{lem} \label{lem:martinduality}
If $Q$ is a nonempty polyhedron, $\gamma \in \R$, and
\[
P = \{x \mid \scalProd{x}{y} \leqslant \gamma \text{ for every } y \in Q\},
\]
then $\xc(P) \leqslant \xc(Q) + 1$.
\end{lem}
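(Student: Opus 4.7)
The plan is to prove the lemma via LP duality applied to a minimum-size extension of $Q$. Intuitively, a point $x$ lies in $P$ if and only if $\gamma$ is an upper bound on the linear program $\max\{\scalProd{x}{y} : y \in Q\}$, and by strong duality this can be certified by a dual solution. The dual has one variable per facet of an extension of $Q$ and one extra constraint coming from $\gamma$, which gives exactly $\xc(Q)+1$ inequalities.

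More concretely, let $f \coloneqq \xc(Q)$, and fix an extension $Q' \subseteq \R^{d+k}$ of $Q$ of the form
\[
Q' = \{(y,z) \in \R^d \times \R^k \mid Ay + Bz \leqslant c\},
\]
with $A \in \R^{f\times d}$, $B \in \R^{f\times k}$ and $c \in \R^f$, whose projection onto the first $d$ coordinates equals $Q$. The first step is to observe that $x \in P$ if and only if $\scalProd{x}{y} \leqslant \gamma$ for every $(y,z) \in Q'$ (the value of $z$ is irrelevant), that is, the LP $\max\{\scalProd{x}{y} : (y,z) \in Q'\}$ has optimal value at most $\gamma$. Since $Q$ (hence $Q'$) is nonempty, the primal is feasible, so by LP strong duality this condition is equivalent to the existence of $u \in \R^f$ with
\[
u \geqslant 0, \qquad A^{\!\top} u = x, \qquad B^{\!\top} u = 0, \qquad c^{\!\top} u \leqslant \gamma.
\]
(If the primal is unbounded, the dual is infeasible, and indeed $x \notin P$; if the primal has finite value $\leqslant \gamma$, strong duality produces a dual feasible $u$ with $c^{\!\top} u \leqslant \gamma$.)

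The second step packages this into an extension. Define
\[
\widetilde{P} \coloneqq \{(x,u) \in \R^{d+f} \mid u \geqslant 0,\ A^{\!\top} u = x,\ B^{\!\top} u = 0,\ c^{\!\top} u \leqslant \gamma\}.
\]
By the equivalence above, the projection $(x,u) \mapsto x$ sends $\widetilde{P}$ onto $P$, so $\widetilde{P}$ is an extension of $P$. The polyhedron $\widetilde{P}$ is cut out by the $f$ nonnegativity constraints on the coordinates of $u$ together with the single inequality $c^{\!\top} u \leqslant \gamma$ (the remaining conditions are affine equalities and do not produce facets), so it has at most $f+1$ facets. Therefore $\xc(P) \leqslant \xc(\widetilde{P}) \leqslant f+1 = \xc(Q)+1$.

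The only subtle point is verifying that strong duality applies in both directions, i.e.\ that the absence of a dual certificate really forces $x \notin P$. This is handled by the standard dichotomy for feasible LPs (bounded optimum yields dual feasibility matching the primal value; unbounded optimum yields dual infeasibility), and nonemptiness of $Q'$ is exactly what rules out the pathological primal-infeasible case. Beyond that, the argument is a direct translation of the Farkas/duality certificate into an affine projection, so no further obstacles arise.
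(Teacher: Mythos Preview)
The paper does not prove this lemma itself; it simply attributes the result to Martin~\cite{Martin91} and uses it as a black box. Your LP-duality argument is correct and is essentially the standard proof of Martin's result. One cosmetic point: writing the minimal extension as $\{(y,z)\mid Ay+Bz\leqslant c\}$ with exactly $f=\xc(Q)$ rows tacitly assumes the extension is full-dimensional; in general you should either add equality constraints (which produce free dual variables and hence no extra facets in $\widetilde P$) or first restrict to the affine hull and allow an arbitrary affine projection rather than a coordinate projection. Either way the facet count of $\widetilde P$ remains at most $f+1$, so this does not affect the argument.
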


The \emph{edge polytope} $\edge(G)$ of a graph $G$ is the convex hull of the incidence vectors in $\R^{V(G)}$ of all edges of $G$.  The second ingredient we need is the following bound on the extension complexity of the edge polytope of all $n$-vertex graphs due to Fiorini, Kaibel, Pashkovich, and Theis \cite[Lemma 3.4]{FKPT13}.  This bound follows from a nice result of Tuza \cite{Tuza84}, which states that every $n$-vertex graph can be covered with a set of bicliques of total weight $O(n^2 / \log n)$, where the weight of a biclique is its number of vertices.  

\begin{lem} \label{lem:edgeupper bound}
For every graph $G$ with $n$ vertices, $\xc(\edge(G)) = O(n^2 / \log n)$. 
\end{lem}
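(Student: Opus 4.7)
The plan is to combine the biclique-cover theorem of Tuza with the extension-complexity bound for a convex hull of polytopes. Tuza's theorem provides, for every $n$-vertex graph $G$, a family of complete bipartite subgraphs $B_1 = X_1 \times Y_1, \ldots, B_k = X_k \times Y_k$ (with disjoint $X_i, Y_i \subseteq V(G)$) that covers $E(G)$ and satisfies $\sum_{i=1}^k (|X_i|+|Y_i|) = O(n^2/\log n)$. The idea is to use this covering to decompose $\edge(G)$ as a convex hull of simple ``biclique pieces''.

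For each biclique $B_i$, I would consider
\[
P_i \coloneqq \conv\{e_u + e_v : u \in X_i,\, v \in Y_i\} \subseteq \R^{V(G)}.
\]
Writing out a generic convex combination $\sum_{u,v}\lambda_{uv}(e_u+e_v)$ and regrouping the coefficients by $u$ and by $v$ yields the Minkowski decomposition
\[
P_i = \conv\{e_u : u \in X_i\} + \conv\{e_v : v \in Y_i\},
\]
so $P_i$ is a sum of two simplices. Since a simplex on $r$ vertices has extension complexity $r$ and extension complexity is subadditive under Minkowski sums, one obtains $\xc(P_i) \leqslant |X_i|+|Y_i|$.

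Because the $B_i$ cover $E(G)$, every vertex of $\edge(G)$ belongs to some $P_i$, hence $\edge(G) = \conv(P_1 \cup \cdots \cup P_k)$. The standard disjunctive-programming extension for a convex hull of polytopes (one copy of each $P_i$ scaled by a convex-combination variable $\lambda_i$) gives $\xc(\conv(\bigcup_i P_i)) \leqslant k + \sum_i \xc(P_i)$, so
\[
\xc(\edge(G)) \leqslant k + \sum_{i=1}^k (|X_i|+|Y_i|) = O(n^2/\log n),
\]
using Tuza's bound together with $k \leqslant \tfrac{1}{2}\sum_i(|X_i|+|Y_i|)$ (each non-trivial biclique has weight at least $2$).

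The genuinely hard step here is Tuza's theorem itself, which I would take as a black box. With it in hand, the polyhedral side reduces to two routine ingredients: the Minkowski-sum decomposition of each $P_i$ and the standard extension for a union of polytopes. Note that neither of these uses anything about bipartiteness of $G$ --- the lemma holds for every graph, which is why it can be quoted in the general form stated.
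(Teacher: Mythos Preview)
Your argument is correct and matches the approach the paper indicates: the paper does not prove this lemma but cites it from \cite{FKPT13}, noting only that it ``follows from a nice result of Tuza'' on biclique covers of total weight $O(n^2/\log n)$. Your proposal fleshes out precisely this route---Tuza's cover, each biclique edge-polytope as a Minkowski sum of two simplices, then Balas' disjunctive union---which is exactly the derivation in \cite{FKPT13}.
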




\begin{proof}[Proof of Theorem \ref{thm:upperbound}]
Let $G = (V,E)$. Since
\[
\stab(G) = \R_{\geqslant 0}^{V} \cap \{x \in \R^V  \mid \scalProd{x}{y} \leqslant 1 \text{ for every } y \in \edge(G)\},
\]
By Lemmas \ref{lem:martinduality} and \ref{lem:edgeupper bound}, the extension complexity of $\stab(G)$ is $O(n^2 / \log n)$. 
\end{proof}

\section{An Improved Lower Bound} \label{sec:lower bound}

In this section we prove Theorem \ref{thm:lowerbound}.  The examples we use to prove our lower bound are incidence graphs of finite projective planes.  We will not use any theorems from projective geometry, but the interested reader can refer to \cite{Coxeter94}. 

Let $q$ be a prime power, $\GF(q)$ be the field with $q$ elements, and $\PG(2,q)$ be the projective plane over $\GF(q)$.  The \emph{incidence graph} of $\PG(2,q)$, denoted $\mathcal I(q)$, 
is the bipartite graph with bipartition $(\points,\lines)$, where $\points$ is the set of points of $\PG(2,q)$, $\lines$ is the set of lines of $\PG(2,q)$, and $\pp \in \points$ is adjacent to $\LL \in \lines$ if and only if the point $\pp$ lies on the line $\LL$. For example, $\PG(2,2)$ and its incidence graph $\mathcal I(2)$ are depicted in Figure \ref{fig:fanoplane}.

\begin{figure}[ht!]\centering
\subfloat[]{\label{fig:2a}\parbox[c][3.8cm][c]{.45\textwidth}{\centering
\begin{tikzpicture}[scale=1.7]
\draw (0,0)--(2,0)--(1,1.732)--cycle;
\draw (2,0)--(.5,.866);
\draw (0,0)--(1.5,.866);
\draw (1,1.732)--(1,0);
\draw (1,.58) circle (.58cm);

{\tikzstyle{every node}=[fill=red,draw=black,circle,inner sep=2pt,minimum width=1pt]
\node[label={left:$a$}] at (0,0) {};
\node[label={left:$b$}] at (.5,.866) {};
\node[label={above:$c$}] at (1,1.732) {};
\node[label={right:$d$}] at (1.5,.866) {};
\node[label={right:$e$}] at (2,0) {};
\node[label={below:$f$}] at (1,0) {};
\node[label={[yshift=2mm]above right:$g$}] at (1,.58) {};}
{\tikzstyle{every node}=[text = blue]
\node[label={[label distance=2mm,xshift=2.5mm]above:\contour{white}{\color{blue}$1$}}] at (0,0) {};
\node[label={[yshift=-1.75mm,label distance=2mm]above left:\contour{white}{\color{blue}$5$}}] at (2,0) {};
\node[label={[label distance=2mm,xshift=-1.75mm]below right:\contour{white}{\color{blue}$2$}}] at (1,1.732) {};
\node[label={[label distance=2mm]below:\contour{white}{\color{blue}$4$}}] at (1,1.732) {};
\node[label={[label distance=2mm]left:\contour{white}{\color{blue}$3$}}] at (2,0) {};
\node[label={[yshift=-1.75mm,label distance=2mm]above right:\contour{white}{\color{blue}$6$}}] at (0,0) {};
\node[label={[yshift=6.5mm]above right:\contour{white}{\color{blue}$7$}}] at (1,.58) {};}
\end{tikzpicture}
}}\hspace*{2em}\subfloat[]{\label{fig:2b}\parbox[c][3.8cm][c]{.45\textwidth}{\centering
\begin{tikzpicture}[yscale=.6]
\draw (0,6)--(2,6);
\draw (0,6)--(2,4);
\draw (0,6)--(2,1);

\draw (0,5)--(2,6);
\draw (0,5)--(2,2);
\draw (0,5)--(2,0);

\draw (0,4)--(2,6);
\draw (0,4)--(2,5);
\draw (0,4)--(2,3);

\draw (0,3)--(2,5);
\draw (0,3)--(2,1);
\draw (0,3)--(2,0);

\draw (0,2)--(2,5);
\draw (0,2)--(2,4);
\draw (0,2)--(2,2);

\draw (0,1)--(2,3);
\draw (0,1)--(2,4);
\draw (0,1)--(2,0);

\draw (0,0)--(2,3);
\draw (0,0)--(2,2);
\draw (0,0)--(2,1);

{\tikzstyle{every node}=[fill=white,draw=black,circle,inner sep=2pt,minimum width=1pt]
\node[label={left:$g$}] at (0,0) {};
\node[label={left:$f$}] at (0,1) {};
\node[label={left:$e$}] at (0,2) {};
\node[label={left:$d$}] at (0,3) {};
\node[label={left:$c$}] at (0,4) {};
\node[label={left:$b$}] at (0,5) {};
\node[label={left:$a$}] at (0,6) {};

\node[label={right:$7$}] at (2,0) {};
\node[label={right:$6$}] at (2,1) {};
\node[label={right:$5$}] at (2,2) {};
\node[label={right:$4$}] at (2,3) {};
\node[label={right:$3$}] at (2,4) {};
\node[label={right:$2$}] at (2,5) {};
\node[label={right:$1$}] at (2,6) {};}

\end{tikzpicture}
}}
\caption{$\PG(2,2)$ and its incidence graph $\mathcal I (2)$.}\label{fig:fanoplane}
\end{figure}
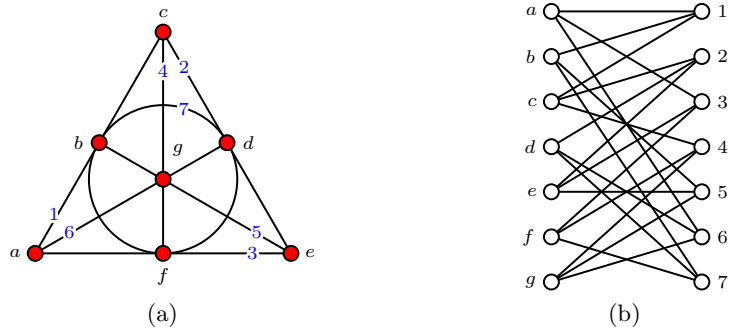

Before proving Theorem \ref{thm:lowerbound} we gather a few lemmas on binomial coefficients.  The first two are well-known, so we omit the easy proofs.    

\begin{lem} \label{lem:easybinomial}
For all integers $h$ and $c$ with $h \geqslant c \geqslant 0$ 
\[
\sum_{j=c}^h \binom{j}{c}=\binom{h+1}{c+1}.
\]
\end{lem}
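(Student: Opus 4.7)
The plan is to prove this hockey-stick identity by a one-line induction on $h$, using Pascal's rule $\binom{j+1}{c+1} = \binom{j}{c+1} + \binom{j}{c}$. First I would dispose of the base case $h = c$, where the left-hand side is the single term $\binom{c}{c} = 1$ and the right-hand side is $\binom{c+1}{c+1} = 1$. For the inductive step, assuming the identity holds for some $h \geqslant c$, I would split off the final term of the sum and apply Pascal's rule:
\[
\sum_{j=c}^{h+1}\binom{j}{c} = \left(\sum_{j=c}^{h}\binom{j}{c}\right) + \binom{h+1}{c} = \binom{h+1}{c+1} + \binom{h+1}{c} = \binom{h+2}{c+1}.
\]

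An equally short alternative is a telescoping proof: rewrite each summand as $\binom{j}{c} = \binom{j+1}{c+1} - \binom{j}{c+1}$ (Pascal's rule again, with the convention $\binom{c}{c+1} = 0$); the sum then collapses to $\binom{h+1}{c+1} - \binom{c}{c+1} = \binom{h+1}{c+1}$. A third route, if a combinatorial argument is preferred, is to observe that $\binom{h+1}{c+1}$ counts $(c+1)$-subsets of $\{0, 1, \dots, h\}$, and partitioning these subsets according to their largest element $j \in \{c, \dots, h\}$ (which leaves $c$ further elements to be chosen from $\{0, \dots, j-1\}$) reproduces the left-hand side term by term.

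I do not expect any obstacle: the authors themselves flag the proof as easy and omit it. The only minor points requiring care are the base case of the induction and the boundary convention $\binom{c}{c+1} = 0$ in the telescoping version.
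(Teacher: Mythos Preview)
Your proof is correct; in fact the paper omits the proof of this lemma entirely, calling it well-known and easy, so any one of your three arguments (induction, telescoping, or the combinatorial count by largest element) is already more than the paper supplies.
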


\begin{lem} \label{lem:binomialidentity} For all positive integers $x,y$, and $h$,
\[
\sum_{j=0}^h \binom{x+j}{j}\binom{h+y-j}{h-j}=\binom{x + y + h + 1}{h}.
\]
\end{lem}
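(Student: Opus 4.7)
The plan is to prove the identity via generating functions, using the negative binomial series
\[
\frac{1}{(1-z)^{a+1}} = \sum_{k \geqslant 0} \binom{a+k}{k} z^k,
\]
valid (as a formal power series identity) for any nonnegative integer $a$. First I would apply this identity with $a = x$ and with $a = y$, obtaining two power series whose product is, on one hand, a Cauchy product matching the left-hand side of the lemma, and on the other hand is simply $(1-z)^{-(x+y+2)}$.

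Concretely, the steps would be: (1) expand both $(1-z)^{-(x+1)}$ and $(1-z)^{-(y+1)}$ using the negative binomial series; (2) observe that their product equals $(1-z)^{-(x+y+2)}$; (3) extract the coefficient of $z^h$ from both expressions. On the Cauchy-product side, this coefficient is
\[
\sum_{j=0}^h \binom{x+j}{j}\binom{y+h-j}{h-j},
\]
which (after noting $h+y-j = y + (h-j)$) is exactly the left-hand side of the lemma. On the compact side, applying the same negative binomial series identity with $a = x+y+1$ gives coefficient $\binom{x+y+h+1}{h}$, which is the right-hand side. Matching the two closes the proof.

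If a fully elementary proof is preferred instead of invoking formal power series, I would fall back on a combinatorial interpretation: $\binom{x+j}{j}$ is the number of size-$j$ multisets drawn from an $(x+1)$-element set (stars and bars), and similarly for $\binom{y+h-j}{h-j}$. Partitioning the size-$h$ multisets of a disjoint union of an $(x+1)$-set and a $(y+1)$-set according to how many of the $h$ elements come from the first set gives the left-hand side; counting them directly via stars and bars on the $(x+y+2)$-element union gives $\binom{(x+y+2)+h-1}{h} = \binom{x+y+h+1}{h}$. A third alternative is induction on $x$, whose base case $x=0$ reduces (after substituting $k = h-j$) to $\sum_{k=0}^h \binom{k+y}{y} = \binom{h+y+1}{y+1}$, which is exactly Lemma~\ref{lem:easybinomial}, and whose inductive step uses Pascal's identity $\binom{x+j}{j} = \binom{x-1+j}{j} + \binom{x-1+j}{j-1}$ combined with the inductive hypothesis at $(x-1, h)$ and $(x-1, h-1)$, followed by one more application of Pascal. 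The main ``obstacle'' is merely choosing among these essentially equivalent routes; no genuine difficulty arises.
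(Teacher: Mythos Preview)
Your proposal is correct: all three routes (generating functions via the negative binomial series, the stars-and-bars bijection, and induction on $x$ using Pascal's rule together with Lemma~\ref{lem:easybinomial}) establish the identity without difficulty. The paper itself does not give a proof of this lemma at all---it is stated as well-known and the proof is omitted---so there is no argument to compare your approach against; any of your three would serve as a complete proof, with the generating-function version being the cleanest to write out.
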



\begin{lem}\label{lem:bound}
Let $q,c,t$ be positive integers with $c + t \leqslant q+1$. Then
\[
t\sum_{k=c}^{q+1-t}\frac{1}{k}\binom{q+1-t-c}{k-c} \binom{q}{k}^{-1} = \binom{t + c -1}{t}^{-1} \leqslant \frac{1}{c}.
\]
\end{lem}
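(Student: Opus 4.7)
The plan is to reduce the sum to the template of Lemma~\ref{lem:binomialidentity}. I would first re-index by setting $j := k-c$, so the sum ranges over $j = 0, 1, \dots, h$ where $h := q+1-t-c$, and the summand becomes $\frac{1}{j+c}\binom{h}{j}\binom{q}{j+c}^{-1}$.

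Next, I would expand the binomials in factorial form. The factor $1/(j+c)$ combines with $(j+c)!$ (coming from $\binom{q}{j+c}^{-1}$) to yield $(j+c-1)!$. Using $q = h+t+c-1$, and hence $q-j-c = h+t-1-j$, the summand can be rewritten, after pulling out the constant $h!(c-1)!(t-1)!/q!$, as $\binom{j+c-1}{j}\binom{h+t-1-j}{h-j}$, where I use the symmetry $\binom{h+t-1-j}{t-1} = \binom{h+t-1-j}{h-j}$ to match the template. Applying Lemma~\ref{lem:binomialidentity} with $x = c-1$ and $y = t-1$ then collapses the sum to $\binom{q}{h}$, and the remaining factorials telescope (using $q-h = t+c-1$) to give $t \cdot \frac{(c-1)!(t-1)!}{(t+c-1)!} = \binom{t+c-1}{t}^{-1}$, which proves the identity.

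For the inequality $\binom{t+c-1}{t}^{-1} \leqslant \frac{1}{c}$, I would rewrite it as $\binom{t+c-1}{c-1} \geqslant c = \binom{c}{c-1}$, which follows immediately from monotonicity of $\binom{n}{c-1}$ in the upper argument, since $t \geqslant 1$ gives $t+c-1 \geqslant c$. The only step requiring care is arranging the summand so that Lemma~\ref{lem:binomialidentity} applies; everything else is routine factorial manipulation, and no genuine obstacle is expected.
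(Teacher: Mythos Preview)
Your proposal is correct and follows essentially the same approach as the paper: both arguments rewrite the summand in factorial form, pull out the constant $\frac{h!(c-1)!(t-1)!}{q!}$ (with $h=q+1-t-c$), and apply Lemma~\ref{lem:binomialidentity} with $x=c-1$, $y=t-1$ to collapse the sum to $\binom{q}{h}$. The only cosmetic differences are that the paper expands first and re-indexes afterward, and proves the final inequality by the combinatorial observation that the $t$-subsets of a $(t+c-1)$-set containing a fixed $(t-1)$-set already number $c$; your monotonicity argument is equivalent.
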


\begin{proof} We have that
\begin{eqnarray*}
& & t \sum_{k=c}^{q+1-t}\frac{1}{k}\binom{q+1-t-c}{k-c} \binom{q}{k}^{-1} \\
&=& \frac{t (q+1-t-c)!}{q!}\sum_{k=c}^{q+1-t}\frac{(k-1)!(q-k)!}{(k-c)!(q+1-t-k)!} \\
&=& \frac{t (q+1-t-c)!}{q!}(c-1)!(t-1)!\sum_{k=c}^{q+1-t}\binom{k-1}{c-1}\binom{q-k}{t-1}.
\end{eqnarray*}
Moreover, 
\begin{eqnarray*}
\sum_{k=c}^{q+1-t}\binom{k-1}{c-1}\binom{q-k}{t-1} & = & \sum_{j=0}^{q+1-t-c}\binom{c -1 + j}{c - 1}\binom{q-c - j}{t-1} \\
\text{\footnotesize [$h=q+1-t-c$, $x = c-1$, $y=t-1$]} & = & \sum_{j=0}^h \binom{x+j}{j}\binom{h+y-j}{h-j} \\
\text{\footnotesize [by Lemma \ref{lem:binomialidentity}]} & = & \binom{x + y + h + 1}{h} \\
& = & \binom{q}{q+ 1 -t -c}.
\end{eqnarray*}
We conclude that
\begin{eqnarray*}
t \sum_{k=c}^{q+1-t}\frac{1}{k}\binom{q+1-t-c}{k-c} \binom{q}{k}^{-1} & = & \hspace{-2pt}\frac{t (q+1-t-c)!}{q!} \frac{q!(c-1)!(t-1)!}{(q+1-t-c)!(t+c-1)!}  \\
& = & \binom{t +c-1}{t}^{-1}.
 \end{eqnarray*}
The number of $t$-subsets of a set of size $t+c-1$ is at least $c$, since it includes all $t$-subsets containing a fixed set of size $t-1$. Hence, $\binom{ t +c-1}{t}^{-1} \leqslant \frac{1}{c}$.
\end{proof}

From the definition of $\PG(2,q)$ it follows that that $\mathcal I (q)$ is $(q+1)$-regular, $|V(\mathcal I (q))|=2(q^2+q+1)$, and $|E(\mathcal I (q))|=(q+1)(q^2+q+1)$.  Let $n=q^2+q+1$ and note that $\mathcal I (q)$ has $2n$ vertices. We let $\points$ and $\lines$ denote the set of points and lines of $\PG(2,q)$.  We also use the fact that $\mathcal I (q)$ is $C_4$-free. 

We denote the edge vs stable set incidence matrix of $\mathcal I (q)$ by $S_q$. 
Each 1-entry of $S_q$ is of the form $(e,S)$ where $e\in E$, $S\subseteq V$ is a stable set, and $e\cap S=\varnothing$. To prove Theorem \ref{thm:lowerbound} we will assign weights to the 1-entries of $S_q$ in such a way that the total weight is at least $\Omega (n \log n)$, while the weight of every rectangle is at most $1$.  The only entries that will receive non-zero weight are what we call \emph{special entries}, which we now define.

\begin{defn} \label{def:special}
A 1-entry of $S_q$ is \emph{special} if it has the form $(e,S(X))$ where 
\begin{itemize}
\item
$e=\pp \LL$ with $\pp\in \points, \LL \in \lines$,
\item 
$X\subseteq \N(\LL)\setminus \{\pp\}$, $X$ non-empty, 
\item 
$S(X)=X\cup (\lines\setminus \N(X))$.
\end{itemize} 
\end{defn}

We also need the following compact representation of maximal rectangles.

\begin{defn} \label{def:maxrectangles}
Let $R$ be a maximal rectangle. Then $R$ is determined by a pair $(\points_R, \lines_R)$ with $\points_R\subseteq \points$, $\lines_R \subseteq \lines$, where the rows of $R$ are all the edges between $\points_R$ and $\lines_R$ and the columns of $R$ are all the stable sets $S\subseteq V\setminus (\points_R\cup \lines_R)$.
\end{defn}

We are now ready to prove Theorem \ref{thm:lowerbound} in the following form.


{
\renewcommand{\thetheorem}{\ref{thm:lowerbound}}
\begin{theorem}
Let $q$ be a prime power and $n=q^2+q+1$. Then there exists a constant $c>0$ such that 
\[
\xc (\stab (\mathcal I (q))) \geqslant c n \log n.
\]
\end{theorem}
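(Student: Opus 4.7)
The plan is to apply the rectangle covering bound (Theorem~\ref{thm:Yannakakis2}) to $S_q$---which is a submatrix of a slack matrix of $\stab(\mathcal{I}(q))$, so $\xc(\stab(\mathcal{I}(q))) \geqslant \rc(S_q)$---and then use the weighting method: assign non-negative weights $w$ to the 1-entries of $S_q$ with total weight $\Omega(n \log n)$ while every rectangle has weight at most~$1$. Following the paper's hint, only special entries will carry positive weight. A natural choice is
\[
w(\pp\LL, S(X)) \coloneqq \frac{1}{C q \cdot |X|\binom{q}{|X|}}
\]
with $C$ a sufficiently large absolute constant to be fixed. Since $\mathcal{I}(q)$ has $(q+1)n$ edges and each edge admits $\binom{q}{k}$ companions $X$ of size $k$, the total weight is $\frac{(q+1)n}{Cq}H_q$, which is $\Omega(n\log n)$ because $H_q = \Theta(\log q) = \Theta(\log n)$.

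To bound the weight of a rectangle it suffices to consider maximal ones, so let $R = (\points_R, \lines_R)$ be described as in Definition~\ref{def:maxrectangles}. A special entry $(\pp\LL, S(X))$ lies in $R$ exactly when $\pp \in \points_R \cap \N(\LL)$, $\LL \in \lines_R$, and $X$ satisfies $Y_\LL \subseteq X \subseteq \N(\LL) \setminus \points_R$, where $Y_\LL \coloneqq \{\LL \cap \LL' : \LL' \in \lines_R\setminus\{\LL\}\}$ (if $Y_\LL \cap \points_R \neq \varnothing$ then $\LL$ contributes nothing). With $t \coloneqq |\points_R \cap \N(\LL)|$ and $c \coloneqq |Y_\LL|$, which satisfy $c+t \leqslant q+1$, the number of admissible $X$ of size $k$ is $\binom{q+1-t-c}{k-c}$. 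When $|\lines_R| \geqslant 2$ we have $c \geqslant 1$, so Lemma~\ref{lem:bound} bounds the contribution of $\LL$ by $1/c$. The key combinatorial step is then a pigeonhole bound: the $|\lines_R|-1$ other lines each contribute one point to $Y_\LL$ and any point of $\PG(2,q)$ lies on at most $q+1$ lines, so $c_\LL \geqslant \max\bigl(1,(|\lines_R|-1)/q\bigr)$. A short calculation then yields $\sum_{\LL \in \lines_R} 1/c_\LL \leqslant q+1$, bounding the raw rectangle weight by $q+1$ in this case.

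The case $|\lines_R| \leqslant 1$ is the main obstacle, since $c = 0$ falls outside the hypothesis of Lemma~\ref{lem:bound}. For $|\lines_R| = 0$ the rectangle is empty. For $|\lines_R| = 1$ the relevant sum becomes $t\sum_{k=1}^{q+1-t}\binom{q+1-t}{k}/(k\binom{q}{k})$, which I would bound directly via the elementary estimate $\binom{q+1-t}{k}/\binom{q}{k} \leqslant ((q+1-t)/q)^k$ (each factor $(q+1-t-i)/(q-i)$ is at most $(q+1-t)/q$ when $t\geqslant 1$) combined with $\sum_{k \geqslant 1} x^k/k = -\ln(1-x)$. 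This gives $t\ln(q/(t-1))$ for $t \geqslant 2$, whose maximum over $t$ is attained near $t \approx q/e$ at value $\sim q/e + O(1) = O(q)$; the exceptional case $t = 1$ yields only $H_q$. Thus the raw rectangle weight is $O(q)$ in every case, so choosing $C$ large enough makes each rectangle's weight at most~$1$, giving $\rc(S_q) \geqslant \sum w = \Omega(n \log n)$ and hence the theorem.
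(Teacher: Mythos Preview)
Your proof is correct and follows the same weighting argument as the paper, with the same weight function (up to the normalizing constant $C$), the same split on $|\lines_R|$, and the same appeal to Lemma~\ref{lem:bound} in the case $|\lines_R|\geqslant 2$. The only noteworthy differences are that you identify the minimal admissible $X$ explicitly as $Y_\LL=\{\LL\cap\LL':\LL'\in\lines_R\setminus\{\LL\}\}$ (which streamlines the paper's indirect minimality argument in Claim~\ref{rem:inclusion}), and that for $|\lines_R|=1$ you use the analytic bound $\sum_{k\geqslant 1} x^k/k=-\ln(1-x)$ rather than the paper's exact combinatorial identity via Lemma~\ref{lem:easybinomial}; this is why you need the extra constant $C$, whereas the paper obtains $w(R)\leqslant 1$ on the nose.
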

\addtocounter{theorem}{-1}
}

\begin{proof}
Let $n=q^2+q+1$. Let $V=\points \cup \lines$ be the vertices of $\mathcal I (q)$, and $E$ be the edges of $\mathcal I (q)$.
To each special entry $(e,S(X))$ we assign the  weight 
\[
w(e,S(X))=\frac{1}{|X|\binom{q}{|X|} (q+1)}.
\]
All other entries of $S_q$ receive weight zero. 

\begin{clm} \label{firstfact}
$w(S_q)\coloneqq\sum_{(e,S)} w(e,S)\geqslant c n\log n$ for some constant $c$.
\end{clm}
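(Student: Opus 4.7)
The plan is to evaluate $w(S_q)$ directly by switching the order of summation, grouping special entries first by their edge $e=\pp\LL$ and then by the size of the subset $X$. First I would check that the assignment $(e,X)\mapsto (e,S(X))$ is injective into the $1$-entries of $S_q$: the set $S(X)=X\cup(\lines\setminus \N(X))$ is stable (no edge can join $X$ to $\lines\setminus\N(X)$ by definition of $\N(X)$), it is disjoint from $e=\pp\LL$ (since $\pp\notin X$ by the requirement $X\subseteq \N(\LL)\setminus\{\pp\}$, and $\LL\in\N(X)$ because $X\subseteq\N(\LL)$), and the injectivity is immediate since $X$ is recovered from $S(X)$ as $S(X)\cap\points$.

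Next I would exploit $(q+1)$-regularity. For any edge $e=\pp\LL$, the set $\N(\LL)\setminus\{\pp\}$ has size exactly $q$, so the non-empty subsets $X$ of a given size $k\in\{1,\ldots,q\}$ number $\binom{q}{k}$. Summing the contributions of all special entries with edge $e$ yields
\[
\sum_{k=1}^{q}\binom{q}{k}\cdot\frac{1}{k\binom{q}{k}(q+1)}=\frac{1}{q+1}\sum_{k=1}^{q}\frac{1}{k}=\frac{H_q}{q+1},
\]
where $H_q$ denotes the $q$-th harmonic number. The crucial point is that the factor $\binom{q}{k}$ counting the subsets cancels exactly the $\binom{q}{k}$ in the weight formula, which is precisely why the weights were designed this way.

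Multiplying by the number of edges $|E|=(q+1)(q^2+q+1)=(q+1)n$ gives
\[
w(S_q)=|E|\cdot\frac{H_q}{q+1}=n\,H_q.
\]
Since $H_q=\Theta(\log q)$ and $n=q^2+q+1=\Theta(q^2)$ implies $\log q=\Theta(\log n)$, we obtain $w(S_q)=\Theta(n\log n)$, which gives the claim with some constant $c>0$.

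There is no real obstacle here: the calculation is essentially mechanical once the injectivity observation is made and the $\binom{q}{k}$ cancellation is noticed. The only thing one needs to be slightly careful about is verifying that every pair $(e,S(X))$ in Definition~\ref{def:special} is indeed a $1$-entry of $S_q$ (so that the weight $w(e,S(X))$ is well-defined and contributes to the sum), and that distinct $(e,X)$ yield distinct special entries so no double-counting occurs. The deeper work of the proof will then lie in bounding the weight of each rectangle by $1$ using Lemma~\ref{lem:bound}, which is carried out in the subsequent claims.
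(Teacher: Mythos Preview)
Your proposal is correct and follows essentially the same computation as the paper: both group the special entries by edge, observe that the $\binom{q}{k}$ counting the subsets $X$ of size $k$ cancels the $\binom{q}{k}$ in the weight, and arrive at $w(S_q)=n\sum_{k=1}^{q}\frac{1}{k}=nH_q\geqslant c\,n\log n$. Your additional explicit checks that $S(X)$ is stable, disjoint from $e$, and that $(e,X)\mapsto(e,S(X))$ is injective are not spelled out in the paper's subproof but are implicitly used there, so your write-up is if anything slightly more careful.
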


\begin{subproof}
We have that
\begin{eqnarray*}
\sum_{(e,S)} w(e,S)=\sum_{(e,S(X)) \text{ special}} w(e,S(X)) &= &  \sum_{e\in E}\sum_{k=1}^q \binom{q}{k}\frac{1}{k\binom{q}{k} (q+1)} \\
& = & \frac{|E|}{q+1}\sum_{k=1}^q \frac{1}{k}= n \sum_{k=1}^q \frac{1}{k}> c n \log n.
\end{eqnarray*}
The claim follows.
\end{subproof}

Let $R=(\points_R, \lines_R)$ be an arbitrary maximal rectangle. 
We finish the proof by showing that $w(R)\coloneqq\sum_{(e,S)\in R} w(e,S)\leqslant 1$.  
Together with Claim \ref{firstfact} this clearly implies Theorem \ref{thm:lowerbound}. 
We will need the following obvious but useful Claim.

\begin{clm}\label{rem:cover_condition}
A special entry $(\pp \LL, S(X))$ is covered by $R=(\points_R, \lines_R)$ if and only if $X\cap \points_R=\varnothing$, $\lines_R\subseteq \N(X)$, $\pp\in \points_R$, and $\LL \in \lines_R$. 
\end{clm}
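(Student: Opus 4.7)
The plan is to directly unfold Definitions \ref{def:special} and \ref{def:maxrectangles}, arguing each direction of the biconditional simultaneously by reversibility. By Definition \ref{def:maxrectangles}, a 1-entry $(e, S)$ belongs to $R = (\points_R, \lines_R)$ exactly when $e$ is an edge running between $\points_R$ and $\lines_R$, and $S$ is a stable set contained in $V \setminus (\points_R \cup \lines_R)$. For the special entry $(\pp \LL, S(X))$, the row condition $e = \pp \LL$ with one endpoint in $\points_R$ and the other in $\lines_R$ immediately amounts to $\pp \in \points_R$ and $\LL \in \lines_R$, producing two of the four stated conclusions.

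The key step is to split the column condition $S(X) \subseteq V \setminus (\points_R \cup \lines_R)$ along the bipartition $V = \points \cup \lines$. Since $\LL \in \lines$, its neighborhood lies entirely in $\points$, so $X \subseteq \N(\LL) \setminus \{\pp\} \subseteq \points$, while on the other side $\lines \setminus \N(X) \subseteq \lines$. Hence the set $S(X) = X \cup (\lines \setminus \N(X))$ is already split into its point-part $X$ and its line-part $\lines \setminus \N(X)$. Disjointness from $\points_R \cup \lines_R$ therefore decouples into the two independent conditions $X \cap \points_R = \varnothing$ and $(\lines \setminus \N(X)) \cap \lines_R = \varnothing$, where the second is clearly equivalent to $\lines_R \subseteq \N(X)$.

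Putting the row and column parts together yields exactly the four conditions in the claim, and since every implication used is reversible the characterization is a biconditional. There is essentially no obstacle beyond the bookkeeping; the only point that deserves emphasis is that $X$ and $\lines \setminus \N(X)$ live on opposite sides of the bipartition, which is what allows the single containment $S(X) \subseteq V \setminus (\points_R \cup \lines_R)$ to decompose cleanly into a points-side constraint ($X \cap \points_R = \varnothing$) and a lines-side constraint ($\lines_R \subseteq \N(X)$).
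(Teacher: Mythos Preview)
Your proof is correct. The paper does not actually give a proof of this claim, calling it ``obvious but useful''; your argument is precisely the definitional unpacking that justifies it, and is the same approach anyone would take.
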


We consider two cases. First suppose that $\lines_R=\{\LL\}$ for some $\LL$. Then the only special entries covered by $R$ are of the form $(\pp \LL, S(X))$, with $X\subseteq \N(\LL)\setminus \points_R$. Let $\N(\LL)\cap \points_R=\{\pp_1,\dots,\pp_t\}$, where $1\leqslant t\leqslant q+1$. To compute $w(R)$ we have to sum over all edges $\pp_i \LL$ and over all subsets $X\subseteq \N(\LL)\setminus \{\pp_1,\dots,\pp_t\}$.  It follows that
\begin{eqnarray*}
w(R)&=&\sum_{i=1}^t \sum_{k=1}^{q+1-t} \binom{q+1-t}{k}\frac{1}{k\binom{q}{k} (q+1)} \\
&=& t\sum_{k=1}^{q+1-t} \frac{(q+1-t)!}{k! (q+1-t-k)!} \frac{k! (q-k)!}{k q! (q+1)} \\
&=& \frac{t (q+1-t)! (t-1)!}{(q+1)!}\sum_{k=1}^{q+1-t} \binom{q-k}{q+1-t-k}\frac{1}{k}\\
&=& \frac{1}{\binom{q+1}{t}} \sum_{k=1}^{q+1-t} \binom{q-k}{t-1}\frac{1}{k}\leqslant \frac{1}{\binom{q+1}{t}} \sum_{j=t-1}^{q-1} \binom{j}{t-1}=\frac{1}{\binom{q+1}{t}} \binom{q}{t}\leqslant 1,
\end{eqnarray*}
where the last equality follows from Lemma \ref{lem:easybinomial}. 

The remaining case is if $|\lines_R|\geqslant 2$.
For $\LL\in \lines_R$ such that $(\pp \LL,S(X))$ is covered by $R$ for some $\pp, X$, define 
\[
k_\LL=\min \{|X| \mid \text{ there exist } \pp, X : (\pp \LL,S(X)) \text{ is a special entry covered by } R\}.
\] 


\begin{clm}\label{rem:inclusion}
Let $(\pp \LL,S(X))$ be a special entry covered by $R$ such that $|X|=k_\LL$. Then for each $\pp',Y$ such that $R$ covers $(\pp' \LL,S(Y))$, we have $X\subseteq Y$.
\end{clm}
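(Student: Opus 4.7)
The plan is to exhibit the entry $(\pp \LL, S(X \cap Y))$ as a special entry covered by $R$; then minimality of $|X| = k_\LL$ forces $|X \cap Y| \geqslant |X|$, and since $X \cap Y \subseteq X$, this yields $X = X \cap Y \subseteq Y$, as required.

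First I would record, via Claim \ref{rem:cover_condition}, what it means for both $(\pp \LL, S(X))$ and $(\pp' \LL, S(Y))$ to be covered by $R$: the sets $X$ and $Y$ are non-empty subsets of $\N(\LL)$, both are disjoint from $\points_R$, and both satisfy $\lines_R \subseteq \N(X)$ and $\lines_R \subseteq \N(Y)$. From these observations $(X \cap Y) \cap \points_R = \varnothing$ is immediate, and $\pp \notin X \cap Y$ follows from $\pp \notin X$.

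The heart of the argument is to show that $X \cap Y$ is non-empty and, more strongly, that $\lines_R \subseteq \N(X \cap Y)$. This is where the standing assumption $|\lines_R| \geqslant 2$ meets the $C_4$-freeness of $\mathcal I(q)$, equivalently, the fact that two distinct lines of $\PG(2,q)$ share at most one common point. Given any $\LL' \in \lines_R \setminus \{\LL\}$, I would pick $x \in X$ adjacent to $\LL'$ and $y \in Y$ adjacent to $\LL'$, which exist because $\lines_R \subseteq \N(X) \cap \N(Y)$. Since both $x$ and $y$ also lie on $\LL$ (as $X, Y \subseteq \N(\LL)$), they are common neighbors of the distinct lines $\LL, \LL'$; by $C_4$-freeness $x = y$, so this point lies in $X \cap Y$ and witnesses $\LL' \in \N(X \cap Y)$. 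The inclusion $\LL \in \N(X \cap Y)$ is then automatic from $X \cap Y \subseteq \N(\LL)$ once $X \cap Y \neq \varnothing$ is known.

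Having verified these conditions, $(\pp \LL, S(X \cap Y))$ is a special entry covered by $R$, and the minimality of $k_\LL$ closes the argument. The only real obstacle is the $C_4$-free step; everything else is a direct transcription of Claim \ref{rem:cover_condition} and Definition \ref{def:special}, and it is precisely the hypothesis $|\lines_R| \geqslant 2$ that makes the projective-plane geometry available.
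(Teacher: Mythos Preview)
Your proof is correct and follows essentially the same approach as the paper. Both arguments hinge on the identical $C_4$-freeness observation: for each $\LL' \in \lines_R \setminus \{\LL\}$, the unique common neighbor of $\LL$ and $\LL'$ lies in both $X$ and $Y$. The paper then argues that any $\xx \in X$ not arising this way could be deleted without destroying $\lines_R \subseteq \N(X \setminus \{\xx\})$, contradicting minimality, so every element of $X$ lies in $Y$; you instead package the same idea by observing that $X \cap Y$ already satisfies all the covering conditions, so minimality forces $X \cap Y = X$. These are the same argument with a cosmetic difference in how minimality is invoked.
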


\begin{subproof}

 For each $\LL'\in \lines_R\setminus \{\LL\}$ (there is at least one since $|\lines_R|>1$),  we have $\LL'\in \N(X)$ by Claim \ref{rem:cover_condition}.  That is, there is $\xx=\xx(\LL')\in X$ adjacent to $\LL'$. Similarly, since $\LL'\in \N(Y)$, there is ${\mathsf y}={\mathsf y}(\LL')\in Y$ adjacent to $\LL'$. Now, if $\xx(\LL')\neq {\mathsf y}(\LL')$, then $\mathcal I (q)$ contains a $4$-cycle, which is a contradiction. Hence we must have $\xx(\LL')={\mathsf y}(\LL')$ for all $\LL'\in \lines_R\setminus \{\LL\}$. Now if there is an $\xx \in X$ such that $\xx\neq \xx(\LL')$ for every $\LL'\in \lines_R\setminus \{\LL\}$, then $(\pp \LL,S(X\setminus \{\xx\}))$ is still covered by $R$, contradicting the minimality of $X$. We conclude $X\subseteq Y$, as required.
 \end{subproof}

Now fix $\LL \in \lines_R$, and let 
\[
w(\LL)=\sum \{w(\pp \LL,S(X)) \mid (\pp \LL,S(X)) \text{ special}\}.
\]

\begin{clm} \label{clm:w(l)}
For every $\LL \in \lines_R$, 
\[
w(\LL)\leqslant \frac{1}{(q+1)k_\LL}.
\]
\end{clm}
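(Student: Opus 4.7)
The plan is to exploit Claim~\ref{rem:inclusion} to get a clean parametrization of the covered special entries on line $\LL$, and then to recognize the resulting sum as exactly the quantity bounded by Lemma~\ref{lem:bound}.

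First, I would fix $\LL \in \lines_R$ with $k_\LL$ defined (if no special entry on line $\LL$ is covered by $R$, then $w(\LL)=0$ and there is nothing to prove). Set $c \coloneqq k_\LL$ and $t \coloneqq |\points_R \cap \N(\LL)|$. Because at least one covered special entry exists, both $t\geqslant 1$ and $c\geqslant 1$. Let $(\pp_0\LL, S(X_0))$ be a covered special entry with $|X_0|=c$. By Claim~\ref{rem:cover_condition}, $X_0 \subseteq \N(\LL)\setminus\points_R$, a set of size $q+1-t$, so $c+t\leqslant q+1$, matching the hypothesis of Lemma~\ref{lem:bound}.

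Next, I would describe precisely the covered special entries on line $\LL$. By Claim~\ref{rem:cover_condition}, for such an entry $(\pp\LL, S(X))$ we need $\pp \in \points_R \cap \N(\LL)$ (which gives $t$ choices) and $X \subseteq \N(\LL)\setminus\points_R$ with $\lines_R\subseteq \N(X)$. By Claim~\ref{rem:inclusion} applied to $X_0$, every such $X$ satisfies $X_0\subseteq X$. Conversely, any $X$ with $X_0\subseteq X\subseteq \N(\LL)\setminus\points_R$ automatically satisfies $\N(X)\supseteq \N(X_0)\supseteq \lines_R$, so it is valid. Thus the admissible $X$'s are precisely the sets of the form $X_0\cup T$ with $T\subseteq (\N(\LL)\setminus\points_R)\setminus X_0$, a set of size $q+1-t-c$. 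For each size $|X|=k$ with $c\leqslant k\leqslant q+1-t$, the number of admissible $X$'s is $\binom{q+1-t-c}{k-c}$.

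Putting this together, I would compute
\[
w(\LL) \;=\; \sum_{\pp,X} \frac{1}{|X|\binom{q}{|X|}(q+1)} \;=\; \frac{t}{q+1}\sum_{k=c}^{q+1-t} \frac{1}{k}\binom{q+1-t-c}{k-c}\binom{q}{k}^{-1}.
\]
Applying Lemma~\ref{lem:bound} to the inner sum (with parameters $q$, $c$, $t$) yields $\frac{t}{q+1}\cdot\frac{1}{t}\binom{t+c-1}{t}^{-1}\leqslant \frac{1}{(q+1)c}=\frac{1}{(q+1)k_\LL}$, as required.

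The only nontrivial step is the structural one in the second paragraph: turning the minimality of $X_0$ (via Claim~\ref{rem:inclusion}) into the statement that the admissible $X$'s form exactly the up-set of $X_0$ inside $\N(\LL)\setminus\points_R$. Once this parametrization is in place, the rest is bookkeeping and a direct invocation of Lemma~\ref{lem:bound}.
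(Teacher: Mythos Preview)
Your proof is correct and follows the same approach as the paper: use Claim~\ref{rem:inclusion} to bound the admissible $X$'s by the up-set of a minimum $X_0$ inside $\N(\LL)\setminus\points_R$, count them by size, and apply Lemma~\ref{lem:bound} with $c=k_\LL$. You are in fact slightly more careful than the paper, explicitly verifying the hypothesis $c+t\leqslant q+1$ of Lemma~\ref{lem:bound} and arguing the converse inclusion to obtain an equality for $w(\LL)$ where the paper is content with an inequality.
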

\begin{subproof}
Let $\N(\LL)\cap \points_R=\{\pp_1,\dots,\pp_t\}$, where $1\leqslant t\leqslant q+1$. Let $X$ be such that $(\pp \LL,S(X))$ is a special entry covered by $R$ and $|X|=k_\LL$.  
By Claim \ref{rem:inclusion}, the only special entries appearing in the above sum are of the form $(\pp_i \LL,S(Y))$ where $i \in [t]$ and $X \subseteq Y \subseteq (\points \setminus \points_R) \cap N(\ell)$. Therefore 


\[
w(\LL)\leqslant t \sum_{k=k_\LL}^{q+1-t} \binom{q+1-t-k_\LL}{k-k_\LL} \frac{1}{k\binom{q}{k} (q+1)} \leqslant \frac{1}{(q+1)k_\LL},
\]
where the last inequality follows from Lemma \ref{lem:bound} with $c=k_\LL$.
\end{subproof}



\begin{clm}\label{rem:k_min}
For every $\LL \in \lines_R$,  $|\lines_R|\leqslant {(q+1)k_\LL}$.
\end{clm}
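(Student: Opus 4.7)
The plan is short: just unwind the definitions. By the choice of $k_\LL$, there exist $\pp \in \points_R$ and $X \subseteq \points \setminus \points_R$ with $|X| = k_\LL$ such that the special entry $(\pp \LL, S(X))$ is covered by the rectangle $R = (\points_R, \lines_R)$.

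By Claim \ref{rem:cover_condition}, this coverage condition forces $\lines_R \subseteq \N(X)$. Now bound $|\N(X)|$ from above: since $\mathcal{I}(q)$ is $(q+1)$-regular, every point of $\points$ lies on exactly $q+1$ lines, so
\[
|\N(X)| \leqslant \sum_{\xx \in X} |\N(\xx)| = (q+1)|X| = (q+1)k_\LL.
\]
Combining the two inequalities gives $|\lines_R| \leqslant |\N(X)| \leqslant (q+1)k_\LL$, as claimed.

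There is no real obstacle here; the statement is essentially a direct consequence of $(q+1)$-regularity together with the covering characterization in Claim \ref{rem:cover_condition}. The only thing to be careful about is to invoke the existence of a minimizer $X$ realizing $k_\LL$, which is guaranteed by the very definition of $k_\LL$ (the minimum is over a nonempty set because we are in the case $|\lines_R| \geqslant 2$ and $\LL \in \lines_R$ is assumed to appear in some covered special entry).
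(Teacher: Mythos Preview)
Your proof is correct and follows essentially the same approach as the paper: pick a minimizer $X$ with $|X|=k_\LL$, use Claim \ref{rem:cover_condition} to get $\lines_R \subseteq \N(X)$, and then bound $|\N(X)|$ by $(q+1)|X|$ via $(q+1)$-regularity. The paper's proof is nearly identical, just slightly terser.
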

\begin{subproof}
Again, let $X$ be such that $(\pp \LL,S(X))$ is covered by $R$ and assume that $|X|=k_\LL$. By Claim \ref{rem:cover_condition}, we have $\lines_R\subseteq \N(X)$.

Hence $|\lines_R|\leqslant |\N(X)| \leqslant (q+1)|X|=(q+1)k_\LL$.
\end{subproof}

By Claim \ref{clm:w(l)} and Claim \ref{rem:k_min}, for every $\LL \in \lines_R$,
$w(\LL) \leqslant \frac{1}{|\lines_R|}.$  But clearly $w(R)=\sum_{\LL \in \lines_R} w(\LL)$, and so $w(R) \leqslant 1$, as required. This completes the entire proof. 
\end{proof}

\section{A small rectangle cover of the special entries}
\label{sec:limitations}
In this section we show that the submatrix of special entries considered in the previous section has a rectangle cover of size $O(n \log n)$.
Combined with Theorem \ref{thm:lowerbound}, this implies that a minimal set of rectangles that cover all the special entries always has size $\Theta (n \log n)$.
Thus, to improve our bound, we must consider a different set of entries of the slack matrix, or use a different set of graphs.  

This cover will be built from certain labeled trees which we now define.  Note that the \emph{length} of a path is its number of edges.  

\begin{defn}\label{def:tree}
For every integer $k \geqslant 1$, we build a tree $T(k)$ recursively:  
\begin{itemize}
\item The tree $T(1)$ consists of a root $r$ and a single leaf attached to it.

\item For $k>1$, we construct $T(k)$ by first identifying one end of a path $P_1$ of length $k_1\coloneqq\left\lceil \frac{k}{2}\right\rceil$ to another end of a path $P_2$ of length $k_2\coloneqq\left\lfloor \frac{k}{2}\right\rfloor$ along a root vertex $r$. Let $\lambda_i$ be the 
end of $P_i$ that is not $r$.  We then attach a copy of $T(k_i)$ to $\lambda_{3-i}$, identifying $\lambda_{3-i}$ with the root of $T(k_i)$.
We call $P_1$ and $P_2$ the \emph{main} paths of $T(k)$. 
\end{itemize}
\end{defn}



The next Lemma follows easily by induction on $k$.

\begin{lem}\label{lem:covtree}
For all $k\geqslant 1$,
\begin{enumerate}
    \item $T(k)$ has $O(k\log k)$ vertices;
    \item $T(k)$ has $k$ leaves;
    \item \label{lemitem:covtree3} every path from the root $r$ to a leaf has length $k$.
\end{enumerate}
\end{lem}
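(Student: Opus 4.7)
The plan is to prove the three properties simultaneously by induction on $k$. The base case $k=1$ is immediate from the definition: $T(1)$ has $2$ vertices, exactly one leaf, and the unique root-to-leaf path has length $1$.

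For the inductive step, let $k > 1$, set $k_1 \coloneqq \lceil k/2 \rceil$ and $k_2 \coloneqq \lfloor k/2 \rfloor$, and assume the three properties for $k_1$ and $k_2$. I would decompose $T(k)$ into the two main paths $P_1, P_2$ (sharing the root $r$) together with the two recursively attached subtrees, where the copy of $T(k_i)$ is rooted at $\lambda_{3-i}$. I would handle (iii) first: any root-to-leaf path in $T(k)$ must traverse one of the two main paths entirely, since no vertex of $P_1 \cup P_2$ is a leaf. A path going through $P_1$ contributes length $k_1$ to reach $\lambda_1$, and then continues as a root-to-leaf path inside the copy of $T(k_2)$ rooted at $\lambda_1$, which by the inductive hypothesis has length $k_2$; the total is $k_1 + k_2 = k$, and the case going through $P_2$ is symmetric.

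For (ii), I would observe that the only candidates for leaves of $T(k)$ lie in the two attached subtrees: every interior vertex of a main path has degree $2$, the root $r$ has degree $2$, and each $\lambda_i$ inherits the degree of the root of the subtree it was identified with plus one from the main path, hence has degree at least $2$. Consequently the leaves of $T(k)$ are exactly the disjoint union of the leaves of the two attached subtrees, and by induction there are $k_1 + k_2 = k$ of them. For (i), writing $V(k) \coloneqq |V(T(k))|$ and counting carefully gives the recurrence
\[
V(k) = V(k_1) + V(k_2) + k - 1, \qquad V(1) = 2,
\]
where the $k - 1$ accounts for $r$ together with the interior vertices of $P_1$ and $P_2$ (each $\lambda_i$ is already counted inside the subtree it roots).

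This is the standard mergesort-type recurrence, and a short induction with a hypothesis of the form $V(k) \leqslant C k \log_2(k+1)$ for a sufficiently large constant $C$ yields $V(k) = O(k \log k)$. The only step that requires any genuine care is the bound in (i), and even there the obstacle is essentially choosing an inductive hypothesis robust to the floors and ceilings introduced by the split into $k_1$ and $k_2$; properties (ii) and (iii) reduce to bookkeeping about degrees and concatenation of paths across the recursive construction.
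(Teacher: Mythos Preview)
Your proposal is correct and follows exactly the approach the paper indicates: the paper does not spell out a proof at all, stating only that the lemma ``follows easily by induction on $k$,'' and your argument supplies precisely those inductive details. Your vertex-count recurrence $V(k)=V(k_1)+V(k_2)+k-1$ and the handling of (ii) and (iii) are all consistent with the recursive construction in Definition~\ref{def:tree}.
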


\begin{defn}\label{def:labeling}
We recursively define a labeling $\varphi_k : V(T(k)) \setminus \{r\} \rightarrow [k]$ as follows:
\begin{itemize}
\item Let $v$ be the non-root vertex of $V(T(1))$ and set $\varphi_1(v) \coloneqq 1$.  

\item For $k > 1$, let $P_1$ and $P_2$ be the main paths of $T(k)$. We name the vertices of $P_1$ as $r, v_1, \dots , v_{\left\lceil \frac{k}{2}\right\rceil}$ and $P_2$ as $r , v_{\left\lceil \frac{k}{2}\right\rceil+1} , \dots , v_{k}$, where these vertices are listed according to their order along $P_1$ and $P_2$.  Set $k_1 \coloneqq \left\lceil \frac{k}{2}\right\rceil$ and $k_2 \coloneqq \left\lfloor \frac{k}{2}\right\rfloor $. Note that $V(T(k)) = \bigcup_{i = 1,2} (V(P_i) \cup V(B_i))$, where $B_i$ is a copy of the tree $T(k_{3-i})$. We define 

\[
\varphi_k(v) =
\begin{cases}
i,  &\text{if } v=v_i \\
\varphi_{k_2}(v) + k_1,  &\text{if } v \in V(B_1) \setminus V(P_1) \\
\varphi_{k_1}(v),  &\text{if } v \in V(B_2) \setminus V(P_2)
\end{cases}\,
\]
\end{itemize}
\end{defn}

\begin{figure}[ht!]\centering
\subfloat[$T(3)$ and the labeling $\varphi_3$]{\label{fig:t3}\parbox[c][5cm][c]{.46\textwidth}{\centering
\begin{tikzpicture}[scale=.7]
\draw(-1,3)--(-1,2);
\draw(0,3)--(0,2)--(.5,1)--(1,2)--(1,3);
\draw[blue](-1,2)--(-1,1)--(0,0);
\draw[red](0,0)--(.5,1);

\node [xshift=-3mm,left,blue] at (-1,1) {$P_1$};
\node [left] at (-1,2.5) {$B_1$};
\node [xshift=3mm,right,red] at (.25,.5) {$P_2$};
\node at (.5,2.5) {$B_2$};

{\tikzstyle{every node}=[fill=white,draw,circle,inner sep=1.5pt,minimum width=0pt]
\node[label={above:$3$}] at (-1,3) {};
\node[label={right:$2$},draw=blue] at (-1,2) {};
\node[label={above:$1$}] at (1,3) {};
\node[label={right:$2$}] at (1,2) {};
\node[label={left:$1$},draw=blue] at (-1,1) {};
\node[label={right:$3$},draw=red] at (.5,1) {};
\node[label={above:$2$}] at (0,3) {};
\node[label={right:$1$}] at (0,2) {};

\node[label={right:$r$}] at (0,0) {};
}
\end{tikzpicture}
}}\hspace{-1mm}\subfloat[$T(8)$ and the labeling $\varphi_8$]{\label{fig:tq+1}\parbox[c][5cm][c]{.46\textwidth}{\centering
\begin{tikzpicture}[scale=.55]
\tikzset{every label/.append style={font=\scriptsize}}
\def\height{6}

\foreach \i in {0,...,3}{
\draw(2.5*\i,6)--(2.5*\i,5) -- (2.5*\i+1,4)-- (2.5*\i+2,5) -- (2.5*\i+2,6);
\draw (2.5*\i+1,4)-- (2.5*\i+1,3);}

\foreach \i in {0,1}{
\draw (1+5*\i,3)-- (1+5*\i+1.25,2)-- (1+5*\i+2.5,3);
\draw (1+5*\i+1.25,2)--(1+5*\i+1.25,-1);}

\draw (2.25,-1)--  (4.75,-2)-- (7.25,-1);

{\tikzstyle{every node}=[fill=white,draw,circle,inner sep=1.5pt,minimum width=0pt]

\foreach \i in {0,...,3}{
\node[label={above:$\pgfmathparse{2*\i+1}\pgfmathprintnumber{\pgfmathresult}$}] at  (7.5-2.5*\i+2,6) {};
\node[label={above:$\pgfmathparse{2*\i+2}\pgfmathprintnumber{\pgfmathresult}$}] at  (7.5-2.5*\i,6) {};
\node[label={above left:$\pgfmathparse{2*\i+2}\pgfmathprintnumber{\pgfmathresult}$}] at  (7.5-2.5*\i+2,5) {};
\node[label={above right:$\pgfmathparse{2*\i+1}\pgfmathprintnumber{\pgfmathresult}$}] at  (7.5-2.5*\i,5) {};

\node[label={right:$\pgfmathparse{8-\i}\pgfmathprintnumber{\pgfmathresult}$}] at  (6+1.25,2-\i) {};
\node[label={left:$\pgfmathparse{4-\i}\pgfmathprintnumber{\pgfmathresult}$}] at  (1+1.25,2-\i) {};
}

\foreach \i in {0,2}{
\node[label={left:$\pgfmathparse{6-2*\i}\pgfmathprintnumber{\pgfmathresult}$}] at  (2.5*\i+1,4) {};
\node[label={left:$\pgfmathparse{5-2*\i}\pgfmathprintnumber{\pgfmathresult}$}] at  (2.5*\i+1,3) {};
}

\foreach \i in {1,3}{
\node[label={right:$\pgfmathparse{10-2*\i}\pgfmathprintnumber{\pgfmathresult}$}] at  (2.5*\i+1,4) {};
\node[label={right:$\pgfmathparse{9-2*\i}\pgfmathprintnumber{\pgfmathresult}$}] at  (2.5*\i+1,3) {};
}

\node[label={below:$r$}] at  (4.75,-2) {};
}
\end{tikzpicture}
}}
\caption{}
\end{figure}

For each vertex $v \in T(k)$ we let $P(v)$ be the path in $T(k)$ from $r$ to $v$.  

\begin{lem}\label{lem:covtreelabel}
Let $\varphi_k$, $B_1$, and $B_2$ be as in Definition \ref{def:labeling}.
\begin{enumerate}
\item \label{lemitem:covtreelabel1} If $L$ is the set of leaves of $T(k)$, then $\varphi_k(L \cap V(B_1) )= \{\left\lceil \frac{k}{2}\right\rceil +1, \dots, k\}$ and $\varphi_k (L \cap V(B_2))= \{1,\dots,\left\lceil \frac{k}{2}\right\rceil\}$.
    
\item \label{lemitem:covtreelabel2} For every leaf $\lambda$ of $T(k)$, $\varphi_k(V(P(\lambda)) \setminus \{r\})=[k]$.
    

\item \label{lemitem:covtreelabel4} Each label $i\in [k]$ occurs at most $\lceil\log k\rceil +1$ times in the labeling of $T(k)$.
\end{enumerate}
\end{lem}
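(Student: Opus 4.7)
The plan is to prove the three assertions simultaneously by strong induction on $k$. The base case $k=1$ is immediate: $T(1)$ has a single non-root vertex, labelled $1$, so (ii) and (iii) hold trivially, while (i) is vacuous since $B_1,B_2$ are undefined for $k=1$.

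For the inductive step with $k\ge 2$, I would exploit the recursive structure directly. Let $k_1=\lceil k/2\rceil$ and $k_2=\lfloor k/2\rfloor$; then $B_1\cong T(k_2)$ has its root identified with $\lambda_1=v_{k_1}$ and $B_2\cong T(k_1)$ has its root identified with $\lambda_2=v_k$, and $V(T(k))\setminus\{r\}$ decomposes disjointly as $\{v_1,\dots,v_k\}\sqcup (V(B_1)\setminus V(P_1))\sqcup (V(B_2)\setminus V(P_2))$, on which $\varphi_k$ is defined piece by piece by Definition~\ref{def:labeling}. Because $k_1,k_2\ge 1$, the vertices $\lambda_1$ and $\lambda_2$ have degree at least two in $T(k)$, so the leaf set $L$ of $T(k)$ is exactly the disjoint union of the leaves of $B_1$ and the leaves of $B_2$ (each viewed as a subtree).

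Claims (i) and (ii) then fall out of the inductive hypothesis. For (i), applying the inductive (i) to $B_1=T(k_2)$ shows that the labels of the leaves of $T(k_2)$ attain exactly $[k_2]$ under $\varphi_{k_2}$ (take the union of the two parts if $k_2\ge 2$, or use the base case if $k_2=1$); the shift by $k_1$ prescribed for $V(B_1)\setminus V(P_1)$ turns these into $\{k_1+1,\dots,k\}$, matching the claim for $\varphi_k(L\cap V(B_1))$. The case of $B_2$ is symmetric. For (ii), a leaf $\lambda\in V(B_1)$ has $P(\lambda)$ equal to the concatenation of the non-root portion of $P_1$ (carrying labels $1,\dots,k_1$) with the root-to-leaf path of $\lambda$ inside $B_1=T(k_2)$, which by the inductive (ii) covers $[k_2]$ under $\varphi_{k_2}$ and then shifts to $\{k_1+1,\dots,k\}$; together they yield $[k]$. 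The case $\lambda\in V(B_2)$ is analogous.

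For (iii) I would fix $i\in[k]$ and bound its occurrences in each piece of the partition. The piece $\{v_1,\dots,v_k\}$ contributes exactly one occurrence of $i$, namely $v_i$. On $V(B_1)\setminus V(P_1)$ the label reads $\varphi_{k_2}(v)+k_1$, so $i$ occurs there only when $i>k_1$, and then at most $\lceil\log k_2\rceil+1$ times by the inductive (iii). Symmetrically, $V(B_2)\setminus V(P_2)$ contributes only when $i\le k_1$, at most $\lceil\log k_1\rceil+1$ times. In every case the grand total is bounded by $2+\lceil\log k_1\rceil$, so the proof reduces to the elementary inequality $\lceil\log\lceil k/2\rceil\rceil+1\le\lceil\log k\rceil$ for $k\ge 2$, which follows from $\lceil k/2\rceil\le 2^{\lceil\log k\rceil-1}$, i.e.\ from $k\le 2^{\lceil\log k\rceil}$. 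This small numerical step is the only delicate point I anticipate; the rest is a routine induction driven by the recursive definitions of $T(k)$ and $\varphi_k$.
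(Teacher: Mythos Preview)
Your proposal is correct and follows essentially the same inductive argument as the paper: induct on $k$, split the tree into the main paths and the two subtrees $B_1\cong T(k_2)$, $B_2\cong T(k_1)$, read off (i) and (ii) from the recursive definition of $\varphi_k$ together with the induction hypothesis, and for (iii) count one occurrence on the main paths plus at most $\lceil\log\lceil k/2\rceil\rceil+1$ occurrences in the relevant subtree, finishing with the identity $\lceil\log\lceil k/2\rceil\rceil+1=\lceil\log k\rceil$ for $k\ge 2$. Your write-up is slightly more detailed than the paper's (you explicitly verify that $\lambda_1,\lambda_2$ are non-leaves and spell out the numerical inequality), but the route is the same.
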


\begin{proof} We proceed by induction on $k$. Property \ref{lemitem:covtreelabel1} follows directly from the recursive definition of the labeling $\varphi_k$. 

For \ref{lemitem:covtreelabel2}, let $\lambda$ be a leaf and let the (ordered) vertices of $P(\lambda)$ be $r,p_1,\dots,p_k=\lambda$.  Suppose that $\lambda \in V(B_i)$. Then $P(\lambda) \coloneqq P_i \cup P'$, where $P_i$ is a main path of $T(k)$ and $P'$ is the path in $B_i$ going from the root of $B_i$ to $\lambda$. Property \ref{lemitem:covtreelabel2} now follows by induction and the definition of $\varphi_k$.



For \ref{lemitem:covtreelabel4}, first suppose that the label $i$ is in $[k_1]$. Then $i$ appears exactly once in the labeling of the main path $P_1$ of $T(k)$, it does not figure in the labeling of the nodes $V(P_2)\cup (V(B_1) \setminus V(P_1))$, and, by the inductive step, it occurs $\lceil{\log\lceil\frac{k}{2}\rceil}\rceil+1= \lceil{\log k}\rceil$ times in $\varphi_k(B_2)$. The thesis follows. A similar argument settles the remaining case $i \in [k]\setminus [k_1]$.
\end{proof}

Henceforth, we simplify notation and denote the labeling $\varphi_k$ of $T(k)$ as $\varphi$.  
We now recall some notation from the previous section.  Let $q$ be a prime power and $S_q$ be the edge vs stable set incidence matrix of $\mathcal I (q)$. 

A maximal rectangle $R=(\points_R, \lines_R)$ is \emph{centered} if $|\lines_R| \geqslant 2$ and there is a point $\cc\in \points \setminus \points_R$ such that $\cc$ is incident to all lines in $\lines_R$.  We call $\cc$ the \emph{center} of $R$.  Note that the center is unique and its existence implies that $|\lines_R| \leqslant q+1$.  

One way to create centered rectangles is as follows.  Let $\LL$ be a line, $\cc$ be a point on $\LL$, and $Y \subseteq \N (\LL)$ with $\cc \in Y$. We let $\boxed{\cc, \LL, Y}$ be the centered rectangle $R = (\points_R,\lines_R)$ where $\points_R=\N(\LL) \setminus Y$ and $\lines_R=\N(\cc)$. Note that a special entry of the form $(\pp \LL,S(X))$ is covered by the centered rectangle $\boxed{\cc, \LL, Y}$ if and only if $\pp\notin Y$ and $\cc\in X \subseteq Y$.

We now fix a line $\LL \in \PG (2,q)$ and let $\N(\LL)=\{\pp_1,\dots, \pp_{q+1}\}$.
We will use the labeling $\varphi$ of $T(q+1)$ to provide a collection of centered rectangles that cover all special entries of the form $(\pp \LL,S(X))$. Recall that for a vertex $v$ of $T(q+1)$, $P(v)$ denotes the path in $T(q+1)$ from $r$ to $v$. If $v$ is neither the root nor a leaf of $T(q+1)$, we define
\[
Y(v)\coloneqq \{\pp_{\varphi(u)} \mid \text{ $u$ is a non-root vertex of $P(v)$}\}. 
\]

\begin{lem} \label{lem:localcover}
Fix a line $\LL\in \PG (2,q)$ and let $\N(\LL)=\{\pp_1,\dots, \pp_{q+1}\}$.
Let $\mathcal{R}_\LL$ be the collection of all centered rectangles $\boxed{\pp_{\varphi(v)}, \LL, Y(v)}$ where $v$ ranges over all non-root, non-leaf vertices of $T(q+1)$. Then every special entry $(e,S)$ with $\ell$ incident to $e$ is covered by some rectangle $R\in \mathcal{R}_\LL$. 
\end{lem}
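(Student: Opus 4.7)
Fix a special entry $(e, S)$ with $\LL$ incident to $e$. Write $e = \pp_i \LL$ and $S = S(X)$ with $X \subseteq \N(\LL) \setminus \{\pp_i\}$ non-empty, and set $I \coloneqq \{j \in [q+1] : \pp_j \in X\}$, a non-empty subset of $[q+1] \setminus \{i\}$. Since $Y(v) = \{\pp_{\varphi(u)} : u \in V(P(v)) \setminus \{r\}\}$, the centered rectangle $\boxed{\pp_{\varphi(v)}, \LL, Y(v)}$ covers $(e, S)$ precisely when $i \notin \varphi(V(P(v)) \setminus \{r\})$, $I \subseteq \varphi(V(P(v)) \setminus \{r\})$, and $\varphi(v) \in I$. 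The lemma therefore reduces to exhibiting, for every such pair $(i, I)$, a non-root non-leaf vertex $v$ of $T(q+1)$ satisfying these three labeling conditions.

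I plan to prove this in the slightly more general form by induction on $k$: for every $k \geq 2$, every $i \in [k]$, and every non-empty $I \subseteq [k] \setminus \{i\}$, the labeled tree $(T(k), \varphi_k)$ contains such a vertex. The base case $k = 2$ is immediate, since the only instances $(i, I) \in \{(1, \{2\}), (2, \{1\})\}$ are solved by $v = v_2$ and $v = v_1$ respectively.

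For the inductive step, by the symmetry between $(P_1, B_1)$ and $(P_2, B_2)$ I would assume $i \in [k_1]$. Since $i$ labels $v_i$ on the main path $P_1$ and $B_1$ is attached at $v_{k_1} = \lambda_1$, every path from $r$ into $B_1$ traverses $v_i$, so the whole subtree $B_1$ is off-limits. I split into three cases. If $I \subseteq [i-1]$, take $v = v_{\max I}$ on $P_1$, which stops short of $v_i$ and whose non-root ancestors carry labels $\{1, \dots, \max I\} \supseteq I$. Else if $I \cap [k_1] = \emptyset$, so $I \subseteq \{k_1+1, \dots, k\}$, take $v = v_{\max I}$ on $P_2$; this vertex is non-leaf, either as an interior vertex of $P_2$ or as $\lambda_2$ with $B_2$ attached. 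Otherwise traverse $P_2$ to $\lambda_2$ and invoke the induction hypothesis on $B_2$ (naturally identified with $T(k_1)$, labeled by $\varphi_{k_1}$) with the same forbidden label $i$ and reduced target $I \cap [k_1]$; take $v$ to be the vertex returned.

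The main verification is in this last, recursive case. The path from $r$ to $v$ in $T(k)$ is the concatenation of $P_2$ with the inductively produced sub-path inside $B_2$, and $\varphi_k$ agrees with $\varphi_{k_1}$ on $V(B_2) \setminus V(P_2)$. Hence the non-root labels of $P(v)$ in $T(k)$ are $\{k_1+1, \dots, k\}$ together with the $\varphi_{k_1}$-labels from induction; this set contains $I \cap (k_1, k]$ trivially and $I \cap [k_1]$ by the inductive hypothesis, hence all of $I$, and avoids $i$ by both halves. Moreover $\varphi_k(v) = \varphi_{k_1}(v) \in I \cap [k_1] \subseteq I$. Well-foundedness of the recursion is the point I expect to require care: the recursive case forces $I \cap [k_1] \neq \emptyset$, hence $[k_1] \setminus \{i\} \neq \emptyset$ and $k_1 \geq 2$, so we never descend to $T(1)$ (which has no eligible vertex). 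The remaining delicate part of the argument is the completely symmetric treatment of $i \in [k] \setminus [k_1]$, which plays $B_1 \cong T(k_2)$ in the role of $B_2$ and uses the same reasoning to show that $T(k_2) = T(1)$ would force $I$ to be empty, so is never actually reached.
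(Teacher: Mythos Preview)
Your argument is correct, but it takes a markedly different route from the paper. The paper's proof is a one-line walk: take the unique leaf $\lambda$ of $T(q+1)$ with $\varphi(\lambda)=i$, list the vertices of $P(\lambda)$ as $r,u_1,\dots,u_{q+1}=\lambda$, and set $j=\max\{s:\pp_{\varphi(u_s)}\in X\}$. Since $\varphi$ restricted to $P(\lambda)\setminus\{r\}$ is a bijection onto $[q+1]$ (Lemma~\ref{lem:covtreelabel}\ref{lemitem:covtreelabel2}), the prefix labels $\{\varphi(u_1),\dots,\varphi(u_j)\}$ automatically contain $I$, omit $i$ (because $j<q+1$), and $\varphi(u_j)\in I$ by choice of $j$. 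No induction, no case split.

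Your induction on $k$ instead exploits the recursive build of $T(k)$ directly and never invokes Lemma~\ref{lem:covtreelabel}\ref{lemitem:covtreelabel2}; in effect you are reproving the relevant consequence of that lemma on the fly. The three cases and the well-foundedness check (that the recursive branch forces $k_1\geq 2$, resp.\ $k_2\geq 2$) are all sound. What you gain is self-containment; what you lose is brevity and the conceptual picture that the whole lemma is just ``walk toward the leaf labeled $i$ and stop at the last label in $I$.'' If you already have Lemma~\ref{lem:covtreelabel}\ref{lemitem:covtreelabel2} available, the paper's argument is preferable; if you wanted to bypass that lemma entirely, your approach does the job.
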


\begin{proof}

Let $(\pp_i \LL,S(X))$ be such a special entry and let $\lambda$ be the (unique) leaf of $T(q+1)$ such that $\varphi(\lambda) = i$.  Name the vertices of $P(\lambda)$ as $r, u_1, \dots,u_{q+1}=\lambda$ (ordered away from the root).

Define $j = \max \{i \mid \pp_{\varphi(u_i)} \in X\}$. Since $\pp_{\varphi(\lambda)}\notin X$, note $j < q+1$.  By Lemma \ref{lem:covtreelabel}, $X  \subseteq Y(u_j)$. Also, by construction, $\pp_{\varphi(u_j)} \in X$ and $\pp\notin Y(u_j)$. We conclude that the centered rectangle $\boxed{\pp_{\varphi(u_j)}, \LL, Y(u_j)}$ covers the special entry $(\pp_i \LL,S(X))$, as required.
\end{proof}

By Lemma \ref{lem:localcover}, for each line $\LL$, there is a set $\mathcal{R}_\LL$ of $O(q\log q)$ centered rectangles that cover all special entries of the form $(\pp \LL,S(X))$. By taking the union of all $\mathcal{R}_\ell$, we get a cover $\mathcal{R}$ of size $O(n q\log q)$ for all the special entries. To prove the main theorem of this section, we now reduce the size of $\mathcal{R}$ by a factor of $q$.

\begin{theorem} \label{thm:smallcover}
There is a set of $O(n\log n)$ centered rectangles that cover all the special entries.
\end{theorem}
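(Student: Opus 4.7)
The plan is to reduce the $O(nq\log q)$-sized cover $\bigcup_\LL \mathcal{R}_\LL$ from Lemma \ref{lem:localcover} by a factor of $q$. The key observation is that each rectangle $\boxed{\cc,\LL,Y}$ in $\mathcal{R}_\LL$ has $\lines_R=\N(\cc)$, yet only covers special entries on the single line $\LL$, so the ``multi-line'' potential of $\lines_R=\N(\cc)$ is wasted. I would replace groups of $q+1$ such single-line rectangles---one per line through a common center $\cc$---by a single centered rectangle that covers all their special entries simultaneously.

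The key geometric fact is that for any point $\cc$, the $q+1$ sets $A_j\coloneqq\N(\LL^{(j)})\setminus\{\cc\}$, one per line $\LL^{(j)}\in\N(\cc)$, partition $\points\setminus\{\cc\}$ (since two distinct lines through $\cc$ meet only at $\cc$). Consequently, for any choice of $B_j\subseteq A_j$, the set $P_R\coloneqq B_1\sqcup\dots\sqcup B_{q+1}$ yields a valid centered rectangle $(P_R,\N(\cc))$ at $\cc$, and this rectangle covers a special entry $(\pp\LL^{(j_0)},S(X))$ iff $\pp\in B_{j_0}$, $X\cap P_R=\varnothing$, and $\cc\in X$ (the last condition is forced, since $\lines_R=\N(\cc)\subseteq\N(X)$ combined with $X\subseteq\N(\LL^{(j_0)})$ forces $\cc\in X$ via the partition property).

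For each point $\cc$ I would fix an ordering of $\N(\LL)$ placing $\cc$ in position $1$, for each line $\LL\in\N(\cc)$. Then, for each non-root non-leaf $v\in T(q+1)$ with $\varphi(v)=1$, set $B_j\coloneqq A_j\setminus(Y(v)^{\LL^{(j)}}\setminus\{\cc\})$ and define $\tilde R_{v,\cc}\coloneqq(B_1\sqcup\dots\sqcup B_{q+1},\,\N(\cc))$. A direct check using the above characterization shows that $\tilde R_{v,\cc}$ covers a special entry if and only if some $R_{v,\LL^{(j)}}$ (under the $\cc$-adapted ordering) from $\mathcal{R}_{\LL^{(j)}}$ covers it. By Lemma \ref{lem:covtreelabel}(iii), label $1$ occurs $O(\log q)$ times in $T(q+1)$, so each center $\cc$ contributes $O(\log q)$ merged rectangles, and summing over the $n$ centers gives $O(n\log q)=O(n\log n)$ rectangles in total.

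The main obstacle---and the most delicate step---is verifying coverage: every special entry $(\pp\LL,S(X))$ must be covered by at least one $\tilde R_{v,\cc}$. Since we only merge using tree vertices with $\varphi(v)=1$, we need, for each entry, to find a center $\cc\in X$ such that the proof of Lemma \ref{lem:localcover} under the $\cc$-adapted ordering returns a covering vertex with label $1$. In the notation of that proof, this requires $j=i_1$, where $i_1$ is the position of label $1$ on the root-to-$\lambda_{\pp}$ path; equivalently, that no point of $X\setminus\{\cc\}$ carries a $\cc$-adapted label appearing strictly later than $i_1$ on that path. I expect this can be arranged by choosing $\cc\in X$ to be the point whose corresponding leaf in $T(q+1)$ yields the largest value of $i_1$ and then tuning the free positions of the $\cc$-adapted ordering, leveraging Lemma \ref{lem:covtreelabel}(ii) (every root-to-leaf path in $T(q+1)$ uses each label in $[q+1]$ exactly once) to guarantee the required flexibility. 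Turning this outline into a rigorous proof will be the most delicate part of the argument.
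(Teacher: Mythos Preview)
Your core idea---merging, for each center $\cc$, the $\cc$-centered rectangles coming from the $q+1$ lines through $\cc$ into a single family of size $O(\log q)$---is exactly the paper's strategy, and your partition observation (that the sets $\N(\LL^{(j)})\setminus\{\cc\}$ partition $\points\setminus\{\cc\}$) is precisely the $C_4$-freeness argument that makes the merge valid. So the plan is on the right track.

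The gap is in how you force $O(\log q)$ rectangles per center. You introduce a $\cc$-adapted ordering of $\N(\LL)$ that puts $\cc$ in position $1$, and then keep only the merged rectangles $\tilde R_{v,\cc}$ with $\varphi(v)=1$. But the sets $Y(v)$ for non-root non-leaf $v$ with $\varphi(v)=1$ form a very restricted nested chain: they are exactly $\{\pp_1\}\cup\{\pp_i : i>a\}$ for $a$ running through the sequence $q+1,\lceil(q+1)/2\rceil,\lceil\lceil(q+1)/2\rceil/2\rceil,\dots,2$. Covering $(\pp\LL,S(X))$ with center $\cc$ then requires an $a$ in this sparse sequence with $(\text{position of }\pp)\le a<\min\{\text{positions in }X\setminus\{\cc\}\}$. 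For $q+1=8$ the sequence is $8,4,2$; if under the $\cc$-adapted ordering some $\xx\in X$ sits in position $3$ and $\pp$ sits in position $4$, no such $a$ exists. You then hope to rescue the entry by switching to another center in $X$, but since all the orderings must be fixed in advance, this becomes a global design problem across all $(\pp,X)$ simultaneously. You have not solved it, and the heuristic you propose (``choose $\cc\in X$ whose leaf yields the largest $i_1$, then tune the free positions'') does not obviously close the gap, because the orderings cannot be tuned per entry.

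The paper avoids this detour entirely. It keeps a \emph{single} fixed ordering of $\N(\LL)$ per line, so the collections $\mathcal{R}_\LL$ from Lemma~\ref{lem:localcover} already cover all special entries. For a fixed center $\cc$ and any line $\LL\ni\cc$, the number of $\cc$-centered rectangles in $\mathcal{R}_\LL$ equals the number of non-root non-leaf vertices $v$ with $\varphi(v)$ equal to the (fixed) index of $\cc$, which is $O(\log q)$ by Lemma~\ref{lem:covtreelabel}\ref{lemitem:covtreelabel4}. Pick the line $\LL$ through $\cc$ maximizing this count, inject the $\cc$-centered rectangles from each other $\mathcal{R}_{\LL_i}$ into those of $\mathcal{R}_\LL$ by an arbitrary injection $f_i$, and replace each group $f^{-1}(R)$ by the single merged rectangle $\sum_{R'\in f^{-1}(R)} R'$ (your partition argument shows this merge preserves coverage). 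This yields $O(\log q)$ rectangles per center with no relabeling and no coverage problem to solve.
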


\begin{proof}
If $R_1 \coloneqq \boxed{\cc, \LL_1, Y_1}, \dots ,R_k \coloneqq \boxed{\cc, \LL_k, Y_k}$ are centered rectangles with the same center $\cc$, we let $\sum_{i=1}^k R_i=R$ be the maximal rectangle with $\points_{R}= \bigcup_{i=1}^k \N (\LL_i) \setminus \bigcup_{i=1}^k Y_i$ and $\lines_R=\N(\cc)$.  Note that $\sum_{i=1}^k R_i$ is also a centered rectangle with center $\cc$.

\begin{clm} \label{clm:rectangleadding}
If $R_1 \coloneqq \boxed{\cc, \LL_1, Y_1}, \dots ,R_k \coloneqq \boxed{\cc, \LL_k, Y_k}$ are centered rectangles such that $\LL_1, \dots, \LL_k$ are all distinct, then $\sum_{i=1}^k R_i$ covers all special entries covered by $\bigcup_{i=1}^k R_i$. 
\end{clm}

 \begin{subproof}
 Let $(\pp \LL, S(X))$ be a special entry covered by some $\boxed{\cc, \LL_j, Y_j}$. Clearly $\cc\in X\subseteq Y_j \subseteq \bigcup_{i=1}^k Y_i$. By contradiction, suppose $\pp\in \bigcup_{i=1}^k Y_i$. Since $\pp\notin Y_j$, $\pp\in Y_{j'} \subseteq \N(\LL_{j'})$ for some $j' \neq j$. But then $\cc \LL_j \pp\LL_{j'} $ is a $4$-cycle in $\mathcal I (q)$, which is a contradiction. Hence the entry $(\pp \LL, S(X))$ is also covered by $\sum_{i=1}^k R_i$.\end{subproof}
 
We iteratively use Claim \ref{clm:rectangleadding} to reduce the number of rectangles in our covering $\mathcal{R}$. 
For each point $\cc$, name the $q+1$ lines through $\cc$ as $\LL,\LL_1,\dots, \LL_{q}$, so that among $\mathcal{R}_\LL, \mathcal{R}_{\LL_1}, \dots, \mathcal{R}_{\LL_q}$, the collection $\mathcal{R}_\LL$ has the most rectangles with center $\cc$. Note that, by Lemma \ref{lem:covtreelabel}, $\mathcal{R}_\LL$ contains $O(\log q)$ rectangles with center $\cc$.  

Fix $i \in [q]$ and for each rectangle $R \in \mathcal{R}_{\LL_i}$ with center $\cc$ choose a  rectangle $f_i(R)$ with center $\cc$ in $\mathcal{R}_\LL$ such that $f_i(R) \neq f_i(R')$ if $R \neq R'$. For each $R \in \mathcal{R}_{\LL}$ we let 
\[
f^{-1}(R)=\{R\} \cup \bigcup_{i=1}^q \{R' \in \mathcal{R}_{\LL_i} \mid f_i(R')=R \}.
\]

We then remove all rectangles with center $\cc$ that appear in $\mathcal{R}_{\LL}, \mathcal{R}_{\LL_1}, \dots , \mathcal{R}_{\LL_q}$ and replace them with all rectangles of the form  
$\sum_{R' \in f^{-1}(R)} R'$, where $R$ ranges over all rectangles in $\mathcal{R}_{\LL}$ with center $\cc$.
In doing so, we obtain at most $O(\log q)=O(\log n)$ rectangles with center $\cc$. Repeating for every $\cc\in \points$ gives us $O(n\log n)$ rectangles in total.
\end{proof}




\textbf{Acknowledgement.} We thank Monique Laurent and Ronald de Wolf for bringing the topic of this paper to our attention. 
We also acknowledge support from ERC grant \emph{FOREFRONT} (grant agreement no. 615640) funded by the European Research Council under the EU's 7th Framework Programme (FP7/2007-2013) and \emph{Ambizione} grant PZ00P2 154779 \emph{Tight formulations of 0-1 problems} funded by the Swiss National Science Foundation.  Finally, we also thank the five anonymous referees for their constructive comments. 


\bibliography{bipartite}{}
\bibliographystyle{plain}

\end{document}